%% LaTeX Template for ISIT 2020
%%
%% by Stefan M. Moser, October 2017
%% 
%% derived from bare_conf.tex, V1.4a, 2014/09/17, by Michael Shell
%% for use with IEEEtran.cls version 1.8b or later
%%
%% Support sites for IEEEtran.cls:
%%
%% http://www.michaelshell.org/tex/ieeetran/
%% http://moser-isi.ethz.ch/manuals.html#eqlatex
%% http://www.ctan.org/tex-archive/macros/latex/contrib/IEEEtran/
%%

\documentclass[journal,onecolumn, draftclsnofoot, 12pt]{IEEEtran}
% \IEEEoverridecommandlockouts

%% depending on your installation, you may wish to adjust the top margin:
% \addtolength{\topmargin}{9mm}

%%%%%%
%% Packages:
%% Some useful packages (and compatibility issues with the IEEE format)
%% are pointed out at the very end of this template source file (they are 
%% taken verbatim out of bare_conf.tex by Michael Shell).
%
% *** Do not adjust lengths that control margins, column widths, etc. ***
% *** Do not use packages that alter fonts (such as pslatex).         ***
%
\usepackage[utf8]{inputenc} 
\usepackage[T1]{fontenc}
\usepackage{cite}
\usepackage[cmex10]{amsmath} % Use the [cmex10] option to ensure complicance
                             % with IEEE Xplore (see bare_conf.tex)
                             
\usepackage{pgfplots}
\DeclareUnicodeCharacter{2212}{−}
\usepgfplotslibrary{groupplots,dateplot}
\usetikzlibrary{patterns,shapes.arrows}
\pgfplotsset{compat=1.8}

\usepackage{amsfonts}
\usepackage{amssymb}
\usepackage{mathtools}
\usepackage{enumitem}
\usepackage{tikz}
\usepackage{pgfgantt}
\usepackage{prettyref}
\usepackage{refstyle}
\usepackage{amsthm}
\usepackage{xspace}
\usepackage{comment}
\pgfplotsset{compat=1.3} 
\pgfplotsset{plot coordinates/math parser=false, width=1\columnwidth,     
height=0.8\columnwidth}
\usepackage[caption=false, font=normalsize, labelfont=sf, textfont=sf]{subfig}%% Please note that the amsthm package must not be loaded with
%% IEEEtran.cls because IEEEtran provides its own versions of
%% theorems. Also note that IEEEXplore does not accepts submissions
%% with hyperlinks, i.e., hyperref cannot be used.
%\makeatletter

\theoremstyle{definition}
\newtheorem{example}{\protect\examplename}
\theoremstyle{plain}
\newtheorem{rem}{\protect\remarkname}
\theoremstyle{definition}
\newtheorem{defn}{\protect\definitionname}
\theoremstyle{plain}
\newtheorem{thm}{\protect\theoremname}
\theoremstyle{plain}
\newtheorem{lem}{\protect\lemmaname}
\ifx\proof\undefined\
  \newenvironment{proof}[1][\proofname]{\par
    \normalfont\topsep6\p@\@plus6\p@\relax
    \trivlist
    \itemindent\parindent
    \item[\hskip\labelsep
          \scshape
      #1]\ignorespaces
  }{%
    \endtrivlist\@endpefalse
  }
  \providecommand{\proofname}{Proof}
\fi

\newref{lem}{name=Lemma~}
\newref{thm}{name=Theorem~}
\newref{exa}{name=Example~}
\newref{sec}{name=Section~}
\newref{def}{name=Definition~}
\newref{fig}{name=Fig.~}
\newref{subsec}{name=Subsection~}
\newref{rem}{name=Remark~}
\newref{eq}{refcmd=(\ref{#1})}
\newref{sce}{name=scenario~}

\providecommand{\definitionname}{Definition}
\providecommand{\examplename}{Example}
\providecommand{\lemmaname}{Lemma}
\providecommand{\remarkname}{Remark}
\providecommand{\theoremname}{Theorem}

\providecommand{\revisioncolor}{black}

\providecommand{\arxiv}{\hspace{-6pt}}
\newcommand{\revision}[1]{\color{\revisioncolor} #1 \color{black}}

\newlength\fwidth

\setlength{\textfloatsep}{7pt plus 1.0pt minus 2.0pt}

\interdisplaylinepenalty=2500 % As explained in bare_conf.tex

%%%%%%
% correct bad hyphenation here
\hyphenation{op-tical net-works semi-conduc-tor}

\newcommand{\proposed}{SBP}

% \usetikzlibrary{external}
% \tikzexternalize % activate!

% ------------------------------------------------------------
\begin{document}

\bstctlcite{IEEEexample:BSTcontrol}
\title{Bivariate Polynomial Codes for Secure Distributed Matrix Multiplication} %\thanks{ This work received support from the European Research Council (ERC)
%through Starting Grant BEACON (agreement 677854).}\thanks{ The work of J. Gómez-Vilardebó was supported in part by the Catalan Government under Grant SGR2017-1479, and in part by the Spanish Government under Grant RTI2018-099722-B-100 (ARISTIDES). }} 

%%% Many authors with many affiliations:
\author{Burak Hasırcıoğlu,
%\IEEEmembership{Graduate Student Member, IEEE},
Jesús Gómez-Vilardebó,
%\IEEEmembership{Senior Member, IEEE},\\
and~Deniz Gündüz,
%\IEEEmembership{Fellow, IEEE}% <-this % stops a space
%\thanks{This work was partially funded by the European Research Council (ERC) through Starting Grant BEACON (no. 677854) and by the UK EPSRC (grant no. EP/T023600/1) under the CHIST-ERA program. The work of J. Gómez-Vilardebó was supported in part by the Catalan Government under Grant SGR2017-1479, and by the Spanish Government under Grant RTI2018-099722-B-100 (ARISTIDES).}
\thanks{Burak Hasırcıoğlu and Deniz Gündüz are with the Department
of Electrical and Electronic Engineering, Imperial College London, UK. E-mail: \{b.hasircioglu18, d.gunduz\}@imperial.ac.uk}% <-this % stops a space
\thanks{Jesús Gómez-Vilardebó is with Centre Tecnològic de Telecomunicacions
de Catalunya (CTTC/CERCA), Barcelona, Spain. E-mail: jesus.gomez@cttc.es}
\thanks{A preliminary version of this paper has been accepted for a presentation in 2021 IEEE International Symposium on Information Theory (ISIT) \cite{hasircioglu2021speeding}. }}

\maketitle

%%%%%%
%% Abstract: 
%% If your paper is eligible for the student paper award, please add
%% the comment "THIS PAPER IS ELIGIBLE FOR THE STUDENT PAPER
%% AWARD." as a first line in the abstract. 
%% For the final version of the accepted paper, please do not forget
%% to remove this comment!
%%
\begin{abstract}
We consider the problem of secure distributed matrix multiplication (SDMM). Coded computation has been shown to be an effective solution in distributed matrix multiplication, both providing privacy against workers and boosting the computation speed by efficiently mitigating stragglers. 
In this work, we present a non-direct secure extension of the recently introduced bivariate polynomial codes. Bivariate polynomial codes have been shown to be able to further speed up distributed matrix multiplication by exploiting the partial work done by the stragglers rather than completely ignoring them while reducing the upload communication cost and/or the workers' storage's capacity needs. 
We show that, especially for upload communication or storage constrained settings, the proposed approach reduces the average computation time of SDMM compared to its competitors in the literature.
\end{abstract}

\begin{IEEEkeywords}
coded secure computation, bivariate polynomial codes, distributed computation, secure distributed matrix multiplication 
\end{IEEEkeywords}

\vspace{-7pt}
\section{Introduction}\label{sec:intro}

Matrix multiplication is a fundamental building block of many applications in signal processing and machine learning. For some applications, especially those involving massive matrices and 
stringent latency requirements, matrix multiplication in a single computer is infeasible,
and distributed solutions need to be adopted. In such scenarios, the full multiplication task is first partitioned into smaller sub-tasks, which are then distributed across dedicated \emph{workers}. 

In this work, we address two main challenges in distributed matrix multiplication. The first one is referred to as the \emph{stragglers} problem, which refers to unresponsive or slow workers. If completing the full task requires the completion of the computations assigned to all the workers, then straggling workers become a significant bottleneck. To avoid stragglers, additional redundant computations can be assigned to workers. It has been recently shown that the use of error-correcting codes, by treating the slowest workers as erasures instead of simply replicating tasks across workers, significantly lowers the overall computation time \cite{lee2017speeding}. In the context of straggler mitigation, polynomial-type codes are studied in \cite{yu_polynomial_2017, dutta_optimal_2019, yu_straggler_2018-1, jia2019cross}. In these schemes, matrices are first partitioned and encoded using polynomial codes at the master server. Then, workers compute sub-products by multiplying these coded partitions and send the results back to the master for decoding. The minimum number of sub-tasks required to decode the result is referred to as the \emph{recovery threshold} and denoted by $R_{th}$. 
All these works assume that only one sub-product is assigned to each worker, and therefore, any work done by the workers beyond the fastest $R_{th}$ is completely ignored. This is sub-optimal, particularly when the workers have similar computational speeds.
%, proposed in \cite{yu_polynomial_2017}, provide the optimal \emph{download rate}, which is defined as the ratio between the number of bits that need to be sent from the workers to the master and the number of bits required to represent the result. PolyDot codes \cite{dutta_optimal_2019} and entangled polynomial codes \cite{yu_straggler_2018-1} study the trade-off between the \emph{recovery threshold}, which is the minimum number of sub-tasks required to decode the result, and the computation and communication costs. Specifically, they show that the recovery threshold can be improved in exchange for an increase in the computation and communication costs. In \cite{jia2019cross}, cross subspace alignment (CSA) codes are proposed for batch multiplication of matrices. The authors show that CSA codes can improve the download rate substantially at the expense of a considerable increase in the \emph{upload rate}, which is defined as the ratio between the number of bits sent from the master to the workers and the number of bits required to represent the original matrices. 
% To make the sub-products also one-to-any replaceable, bivariate polynomial codes are introduced in [9] and [10], these codes obtain a better trade-off between the upload/storage costs and the average computation time.
This problem is addressed by the \emph{multi-message approaches} in \cite{kiani_exploitation_2018, amiri_computation_2018, ozfatura2020straggler, hasircioglu2020bivariate}. In these works, multiple sub-products are assigned to each worker, and the result of each sub-product is communicated to the master as soon as it is completed. This results in faster completion of the full computation as it allows to exploit partial computations completed by stragglers. Moreover, the multi-message approach makes finishing the task possible even if there are not as many available workers as the recovery threshold.
However, as discussed in \cite{hasircioglu2020bivariate}, a direct extension of polynomial-type codes to the multi-message setting by simply assigning multiple sub-products to the workers increase the \emph{upload communication costs}, which is defined as the number of bits sent from the master to each worker, or equivalently, the storage required per worker. %(if all the matrix information need to be sent in advance)
%As discussed in \cite{hasircioglu2020isit}, by simply assigning multiple sub-products to the workers using polynomial-type codes is not efficient in terms of the upload cost, as one matrix partition can only be used in the computation of one sub-product. JG: SORRY I'M REMOVING THIS ARGUMENT BECOUSE NOT SURE PEOPLE CAN UNDERSTAND THIS IF WE DON'T GIVE MORE DETAILS.
To combat this effect, product codes are proposed in \cite{kiani_exploitation_2018} for the multi-message distributed matrix multiplication problem. However, with product codes, every sub-product is not equally useful while decoding the full-product, i.e., they are not one-to-any replaceable, which degrades their performance.
%reduces the scheme's resource efficiency. 
 %make the sub-products one-to-any replaceable, 
The \textit{bivariate polynomial codes} are introduced in \cite{hasircioglu2020bivariate} to address this issue, achieving a better trade-off between the upload cost and average computation time. %by allowing a single matrix partition to be used in the computation of several sub-tasks.

The second challenge we tackle in this paper is privacy. The multiplied matrices may contain sensitive information, and even partially sharing these matrices with the workers may cause a privacy breach. Moreover, several workers can exchange information with each other to learn about the multiplied matrices. Such a collusion may result in a leakage even if no information is revealed to individual workers. The first application of polynomial codes to privacy-preserving distributed matrix multiplication is presented in \cite{chang2018capacity}. To hide the matrices from the workers, random matrix partitions are created, and linearly encoded together with the true matrix partitions using polynomial codes. This requires increasing the degree of the encoding polynomial and thus increasing the recovery threshold. The recovery threshold has been improved in subsequent works \cite{kakar2018rate}, \cite{d2020gasp}, by carefully choosing the degrees of the encoding monomials so that the resultant decoding polynomial contains the minimum number of additional coefficients. In \cite{aliasgari2020private, jia2021capacity,kakar2019uplink}, lower recovery threshold values than \cite{d2020gasp} are obtained by using different matrix partitioning techniques and different choices of encoding polynomials, at the expense of a considerable increase in the upload cost. In \cite{mital2020secure}, a novel coding approach for distributed matrix multiplication is proposed based on polynomial evaluation at the roots of unity in a finite field. It has constant time decoding complexity and a lower recovery threshold than traditional polynomial-type coding approaches, but the sub-tasks are not one-to-any replaceable and its straggler mitigation capability is limited. In \cite{bitar2020rateless}, a multi-message approach is proposed for SDMM by using rateless codes. Computations are assigned adaptively in rounds, and in each round, workers are classified into clusters depending on their computation speeds. Results from a worker in a cluster are useful for decoding only if the results of all the sub-tasks assigned to that cluster and also to the fastest cluster are collected, making computations not one-to-any replaceable. Still, the strategy exhibited good average computation times by estimating and adapting to the computation speeds of the workers.

%In \cite{aliasgari2020private}, the extension of the PolyDot and entangled polynomial codes to private matrix multiplication is studied, and the trade-off between the download rate and the recovery threshold is explored. In \cite{jia2019capacity}, CSA codes \cite{jia2019cross} are used for the private distributed batch multiplication problem, and are shown to be optimal in terms of the download rate. In \cite{kakar2019uplink}, CSA codes are used for private distributed matrix multiplication. Similarly to \cite{jia2019cross}, it has been shown that a low download rate can be achieved at the price of a large upload rate. 

In this work, we propose Secure Bivariate Polynomial (SBP) codes, for the multi-message, straggler-resistant, SDMM task based on bivariate Hermitian polynomial codes.
%Our scheme works effectively even with a small number of workers and under a limited upload cost budget.
We show that under a limited upload cost budget, or when the number of fast workers is limited, SBP codes outperform other schemes in the literature in terms of the average computation time. We also show that this scheme retains its low average computation time when the computation speeds of the workers significantly differ, i.e., heterogeneous scenario, or when they are close to each other, i.e., homogeneous scenario. In addition, we propose an extension of GASP codes \cite{d2020gasp} to the multi-message setting and evaluate its performance. We show that when the upload cost budget is sufficiently high, the proposed extension could considerably lower the average computation time of the SDMM task.

\vspace{-7pt}
\section{Problem Setting\label{sec:Problem-Setting}}
%\vspace{-5pt}

We study distributed matrix multiplication with strict privacy
requirements. The elements of our matrices are in a finite field $\mathbb{F}$,
and we denote the size of the finite field by $q$. The \textit{master} wants to multiply statistically independent matrices $A\in\mathbb{F}^{r\times s}$ and $B\in\mathbb{F}^{s\times t}$, $r,s,t\in\mathbb{Z}^{+}$, with the help of $N$ dedicated \textit{workers}, which possibly have heterogeneous computation speeds and storage capacities.

To offload the computation, the master divides the multiplication task into smaller sub-tasks, which are then assigned to workers. The master partitions $A$ into $K$ sub-matrices as $A=\begin{bmatrix}A_{1}^{T} & A_{2}^{T} & \cdots & A_{K}^{T}\end{bmatrix}^{T}$,
where $A_{i}\in\mathbb{F}^{\frac{r}{K}\times s}$, $\forall i\in[K]\triangleq\{1,2,\dots,K\}$,
and $B$ into $L$ sub-matrices as $B=\begin{bmatrix}B_{1} & B_{2} & \cdots & B_{L}\end{bmatrix}$,
where $B_{j}\in\mathbb{F}^{s\times\frac{t}{L}}$, $\forall j\in[L]$. The master sends coded versions, i.e., linear combinations, of these partitions to the workers. We assume that there is an \textit{upload cost} constraint per worker, denoted by $u_i$ for worker $i$, which limits the maximum number of bits that can be transmitted from the master to each worker. This upload cost is a limiting factor on the number of coded partitions of $A$, denoted by $m_{A,i}$,  and of $B$, denoted by $m_{B,i}$, that can be sent to each worker.  More specifically, for worker $i$, $m_{A,i}$ and $m_{B,i}$  must satisfy $\left(m_{A,i}rs/K + m_{B,i}st/L\right)\log_2(q) \leq u_i$. Provided that they comply with the upload cost constraint,  $m_{A,i}$ and $m_{B,i}$ are chosen depending on the underlying coding scheme and the master sends coded partitions $\tilde{A}_{i,k}$ and $\tilde{B}_{i,l}$ to worker $i$, where $i\in[N]$, $k\in[m_{A,i}]$ and $l\in [m_{B,i}]$. For simplicity, we describe a static setting, in which all the coded matrices are sent to the workers before they start computations. In a more dynamic scenario, matrix partitions can be delivered when they are needed, which would reduce the memory required at the workers. %The basic assumption is that $m_{A,i}$ partitions of A and $m_{B,i}$ partitions of B are available for worker $i$ at some point, and thus they could exploit them to extract information on the original matrices $A$ and $B$.
The workers multiply the received coded partitions of $A$ and $B$ as instructed by the underlying coding scheme and send the result of each computation to the master as soon as it is completed. Once the master receives a number of computations equal to the \textit{recovery threshold}, $R_{th}$, it can decode the desired multiplication $AB$.

In our threat model, the workers are honest but curious. They follow the protocol, but they can use the received encoded matrices to obtain information about $A$ and $B$. We assume that any $T$ workers can collude, i.e., exchange information among themselves. Our privacy requirement is that no $T$ workers are allowed to gain any information about the content of the multiplied matrices. That is,
\color{\revisioncolor}$$I\left(A,B;\{\tilde{A}_{i,k},\tilde{B}_{i,l}\mid i\in \mathcal{N}, k\in[m_{A,i}],l\in[m_{B,i}]\}\right)=0,$$ \color{black} where $I$  is the mutual information \color{\revisioncolor}and $\mathcal{N}$ is the any subset of $[N]$ with cardinality at most $T$. \color{black}

Under this setting, the main problem we attempt to solve is minimizing the \emph{average computation time}, which is defined as the time required for the master to collect sufficiently many computations to decode the desired computation $AB$. We assume that the workers' computation speeds can be homogeneous, i.e., the average speed of each available worker is close to each other, or heterogeneous, in which the average speeds of the workers vary. Workers can also straggle, i.e., become unresponsive temporarily. 

\color{\revisioncolor}
\section{Extension of Bivariate Polynomial Codes for Secure DMM}\label{sec:naive_extension}

As a first attempt to improve the upload cost efficiency of SDMM, we provide the naive extension of bivariate polynomial codes proposed in \cite{hasircioglu2020bivariate} to SDMM. In \cite{hasircioglu2020bivariate}, the partitioning of the matrices is as described in \secref{Problem-Setting} and the two encoding polynomials are generated as
\begin{equation}
A(x)=A_{1}+A_{2}x+\cdots+A_{K}x^{K-1},
\end{equation}
\begin{equation}
B(y)=B_{1}+B_{2}y+\cdots+B_{L}y^{L-1}.
\end{equation}
Therefore, at the master, the goal is to interpolate the following polynomial.
\begin{equation}
A(x)B(y)=\sum_{i=1}^{K}\sum_{j=1}^{L}A_{i}B_{j}x^{i-1}y^{j-1}.
\end{equation}
Since worker $i\in[N]$ can store $m_{A,i}$ partitions of $A$ and $m_{B,i}$ partitions of $B$, the master sends the first $m_{A,i}$ derivatives of $A(x)$ and first $m_{B,i}$ derivatives of $B(y)$, evaluated at $x_i$ and $y_i$, respectively, which are evaluation points of the encoding polynomials chosen distinct for each worker. Each worker multiplies the received encoded partitions of $A(x)$ and $B(y)$ following a specific order from the smaller-order derivatives to larger-order derivatives and sends the results of each computation as soon as it is finished. Then, once the master receives $KL$ computations from the workers, it instructs all the workers to stop and starts decoding $A(x)B(y)$. 

In order to provide a simple direct extension of this scheme to SDMM in which $T$ worker collude, we limit the analysis to the case $m_{A,i}=m_A$ and $m_{B,i}=m_B$, $\forall i \in [N]$. Thus, from a security point of view, each worker gets $m_A$ and $m_B$ coded partitions of $A$ and $B$, respectively. Since up to $T$ workers collude, in total, $m_AT$ coded partitions of $A$ and $m_BT$ coded partitions of $B$ are leaked to the workers. To protect such a leakage, we need to add $m_AT$ and $m_BT$ random matrix partitions to $A(x)$ and $B(y)$, respectively. Thus, the encoding polynomials for this naive extension of bivariate polynomial codes to SDMM becomes
\begin{equation}
    A(x)=A_{1}+A_{2}x+\cdots+A_{K}x^{K-1}+\sum_{i=1}^{m_AT}R_ix^{K+i-1},
\end{equation}
\begin{equation}
B(y)=B_{1}+B_{2}y+\cdots+B_{L}y^{L-1}+\sum_{i=1}^{m_BT}S_ix^{L+i-1},
\end{equation}
where $R_i$ and $S_i$ are matrix partitions chosen uniformly at random from the elements of $\mathbb{F}_q$. Therefore, the polynomial to be interpolated at the master becomes
\begin{multline}
A(x)B(y)=\sum_{i=1}^{K}\sum_{j=1}^{L}A_{i}B_{j}x^{i-1}y^{j-1} + \sum_{i=1}^K\sum_{j=L+1}^{m_BT}A_iS_jx^{i-1}y^{j-1}
+\sum_{i=K+1}^{m_AT}\sum_{j=1}^{L}R_iB_jx^{i-1}y^{j-1}\\ + \sum_{i=K+1}^{m_AT}\sum_{j=L+1}^{m_BT}R_iS_jx^{i-1}y^{j-1}.\label{eq:sdmm_axby}
\end{multline}
Therefore, considering the number of monomials of $A(x)B(y)$ in \eqref{sdmm_axby}, $R_{th}=(K+m_AT)(L+m_BT)$ evaluations of $A(x)B(y)$ are needed to interpolate it, which is $\mathcal{O}(T^2)$ has a quadratic dependence on $T$. 

Observe that in this naive extension, for a worker to provide $m=m_Am_B$ computations, uploading $m_A$ coded partitions of $A$ and $m_B$ coded partitions of $B$ are enough. This means that the upload cost of the scheme is on the order of $\sqrt{m}$. However, the price we pay for such a reduced upload cost is a quadratic dependence of the recovery threshold on the number of colluding workers. Such dependence on $T$ may quickly become restrictive for typical $T$ values and hence, the benefits of the naive extension of bivariate polynomial codes to SDMM may be out-weighted by its drawbacks. Thus, we need more sophisticated schemes that can keep this low upload cost with a better scaling behaviour for the recovery threshold with respect to $m$ and $T$. In the next section, we present our proposed solution for such a problem.

\color{black}
\section{Secure Bivariate Polynomial (SBP) Codes}
Our coding scheme is based on bivariate polynomial codes \cite{hasircioglu2020bivariate}. Compared to their univariate counterparts%Thanks to their lower upload cost per sub-task assigned to a worker
, bivariate polynomial codes allow workers to complete more sub-tasks % compared to their univariate counterparts 
under the same upload cost budget, which improves the average computation time and helps to satisfy the privacy requirements.
%However, naive extensions of these codes to the private case results in a recovery threshold with quadratic dependence on $T$. In the following, we propose another bivariate Hermitian polynomial coding scheme which has a bit worse storage efficiency than \cite{hasircioglu_globecom_2020} in exchange for the recovery threshold's linear dependence on $T$. However, the storage efficiency of the proposed scheme is still better than univariate polynomial coding schemes mentioned in \prettyref{sec:intro}.

\vspace{-5pt}
\subsection{Encoding}
\label{subsec:encoding}
In SBP coding scheme, coded matrices are generated by evaluating the following polynomials and their derivatives: 
\begin{equation}
\label{eq:a_x}
A(x)=\sum_{i=1}^{K}A_{i}x^{i-1}+\sum_{i=1}^{T}R_{i}x^{K+i-1},    
\end{equation}
\begin{equation}
\label{eq:b_xy}
B(x,y)=\sum_{i=1}^{L}B_{i}y^{i-1}+\sum_{i=1}^{T}\sum_{j=1}^{m}S_{i,j}x^{K+i-1}y^{j-1},
\end{equation}
where $m \leq L$ is the maximum number of sub-tasks any worker can complete. Matrices $R_i\in \mathbb{F}_q^{\frac{r}{K}\times s}$ and $S_{i,j}\in \mathbb{F}_q^{s \times \frac{t}{L}}$ are independent and uniform randomly generated from their corresponding domain for $i\in [T]$ and $j \in [m]$. For each worker $i$, the master evaluates $A(x)$ at $x_{i}$ and the derivatives of $B(x,y)$ with respect to $y$ up to the
order \color{\revisioncolor}$m-1$ \color{black} at $(x_{i},y_{i})$. We only require these evaluation points to be distinct. Thus, the master sends to worker
$i$, $A(x_{i})$ and $\mathcal{B}_{i}=\{B(x_{i},y_{i}),\partial_{1}B(x_{i},y_{i}),\dots,\partial_{m-1}B(x_{i},y_{i})\}$,
where $\partial_{i}$ denotes the $i^{th}$ partial derivative with respect
to $y$. Thus, we require $m_{A,i}=1$ and $m_{B,i}=m$.

In \eqref{a_x} and \eqref{b_xy}, the role of $R_i$'s and $S_{i,j}$'s is to mask the actual matrix partitions for privacy. The following theorem states that the evaluations of $A(x)$, $B(x,y)$ and its derivatives do not leak any information about $A$ and $B$ to any $T$ colluding workers.

\begin{thm}
For the encoding scheme described above, we have
$ I(A,B;\{A(x_i),\mathcal{B}_i : i\in \mathcal{N}\})=0,$
$\forall \mathcal{N}\subset [N]$ such that \color{\revisioncolor} $|\mathcal{N}|\leq T$. \color{black}
\end{thm}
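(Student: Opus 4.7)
The plan is to show that, for any fixed $(A,B)$ and any collusion set $\mathcal{N}$ with $|\mathcal{N}|\leq T$, the conditional distribution of the tuple $\{A(x_i),\mathcal{B}_i\}_{i\in\mathcal{N}}$ is uniform on its range and does not depend on $(A,B)$; this immediately forces the mutual information to vanish. Since the pads $(R_i)$ and $(S_{i,j})$ are sampled uniformly and independently of $(A,B)$, it suffices to exhibit, for each fixed $(A,B)$, a deterministic bijection between the pads and the messages. By monotonicity of mutual information in the observation, I may assume $|\mathcal{N}|=T$ throughout; smaller collusion sets follow by restriction.

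For the $A$-messages, I would subtract the $A$-dependent affine term, so that $\bigl(A(x_i)-\sum_{j=1}^K A_j x_i^{j-1}\bigr)_{i\in\mathcal{N}} = M_A\,(R_1,\dots,R_T)^{\top}$, where $M_A$ has $(i,j)$-entry $x_i^{K+j-1}$. Factoring $M_A=\mathrm{diag}(x_i^K)\cdot V$, where $V$ is the $T\times T$ Vandermonde in $\{x_i\}_{i\in\mathcal{N}}$, invertibility is immediate once the evaluation points $x_i$ are distinct and nonzero. Hence the map $(R_j)\mapsto (A(x_i))$ is an affine bijection for every fixed $A$.

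For the $B$-messages, I would rewrite the mask as $\sum_{j=1}^m y^{j-1}\,T_j(x)$ with $T_j(x)=\sum_{p=1}^T S_{p,j}\,x^{K+p-1}$, so that the $k$-th $y$-derivative of the mask at $(x_i,y_i)$ equals $\sum_{j=1}^m U_{i,k,j}\,T_j(x_i)$, where $U_{i,k,j}=\partial_k(y^{j-1})|_{y=y_i}$. I would then factor the full pad-to-message map $(S_{p,j})\to\{\mathcal{B}_i\}_{i\in\mathcal{N}}$ (after subtracting the $B$-dependent affine term) through the intermediate variables $(T_j(x_i))_{i,j}$: first $(S_{p,j})\mapsto (T_j(x_i))$, which for each fixed $j$ is a Vandermonde-type bijection identical to Step 1; then, for each fixed $i$, $(T_j(x_i))_j\mapsto(f_{i,k})_k$, governed by the $m\times m$ confluent (Hermite) Vandermonde matrix $U_i$. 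The latter is nonsingular by a standard Hermite-interpolation argument (when $\mathrm{char}(\mathbb{F}_q)>m-1$; otherwise one replaces usual derivatives by Hasse derivatives).

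Composing the two bijections, the combined pads-to-messages map is a bijection, so $\{A(x_i),\mathcal{B}_i\}_{i\in\mathcal{N}}$ is uniformly distributed on its range. The range itself does not depend on $(A,B)$ since $(A,B)$ enters only through an affine translation, and translations are bijections. Therefore the messages are independent of $(A,B)$, giving $I(A,B;\{A(x_i),\mathcal{B}_i\}_{i\in\mathcal{N}})=0$. The main obstacle I expect is the clean verification that the confluent Vandermonde $U_i$ is nonsingular — a short determinant computation built on the Hermite-interpolation structure (or a direct reduction to Hasse derivatives in small characteristic). The $A$-part and the first factor of the $B$-part are both one-line Vandermonde arguments.
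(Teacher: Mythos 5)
Your proof is correct and reaches the same end, but by a genuinely different route. The paper's proof is an entropy-counting argument: it first splits $I(A,B;\cdot)=I(A;\cdot)+I(B;\cdot)$ using independence of $A$ and $B$, and then on each side compares an upper bound $H(\text{observations})\le |\mathcal{N}|\cdot(\text{block size})\log q$ against the claimed equality $H(\text{observations}\mid\text{secret})=H(\text{pads}\mid\text{secret})=T\cdot(\text{block size})\log q$, giving $I\le 0$, hence $I=0$. You instead show that, conditional on $(A,B)$, the observations are an affine bijective image of the uniformly distributed pads, hence uniform on a range that does not depend on $(A,B)$, which forces $I=0$ directly. Both arguments rest on the same underlying linear-algebra fact — that the pads-to-messages map has full rank when $|\mathcal{N}|=T$ — but you prove it explicitly through the factorizations $\mathrm{diag}(x_i^K)\cdot V$ (Vandermonde) on the $A$-side and, on the $B$-side, a per-column Vandermonde block $(S_{p,j})\mapsto (T_j(x_i))$ composed with a per-worker block $U_i$, whereas the paper's equality $H(\{A(x_i)\}\mid A)=H(\{R_i\}\mid A)$ tacitly assumes the bijection without stating it. Your version also carries two refinements worth noting: the explicit reduction to $|\mathcal{N}|=T$ by monotonicity of $I$ (without which the paper's chain of equalities in step (a) does not hold verbatim for $|\mathcal{N}|<T$), and the observation that the $x_i$ must be nonzero and the characteristic large enough for the diagonal factorials not to vanish (or one should switch to Hasse derivatives) — conditions not actually ensured by the paper's bare ``evaluation points distinct'' requirement. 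One small simplification in your step: for a fixed worker $i$, the matrix $U_i$ with entries $\partial_k(y^{j-1})\big|_{y=y_i}$ is upper triangular with diagonal entries $k!$ (since all rows use the same node $y_i$), so it is a triangular matrix rather than a genuine confluent Vandermonde, and its nonsingularity is a one-line determinant check under the characteristic hypothesis rather than a Hermite-interpolation argument.
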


\begin{proof}
Since $A$ and $B$ are independent, we have
\begin{multline} \label{eq:thm1_proof_mutual_inf}
I(A,B;\{A(x_{i}),\mathcal{B}_{i}: i\in \mathcal{N}\})=
I(A;\{A(x_{i}): i\in \mathcal{N}\}) +I(B;\{\mathcal{B}_{i}: i\in \mathcal{N}\}).
\end{multline}
Let us first bound $I(A;\{A(x_i)|i\in \mathcal{N}\})$ as follows.
\begin{align}
&I\left(A;\{A(x_i):i\in \mathcal{N}\}\right)\nonumber\\
&=H\left(\{A(x_i):i\in \mathcal{N}\}\right)-H\left(\{A(x_i):i\in \mathcal{N}\}|A\right)\\
&=H\left(\{A(x_i):i\in \mathcal{N}\}\right)-H\left(\{R_i:i\in [T]\}|A\right)\\
&\stackrel{(a)}{=}H\left(\{A(x_i):i\in \mathcal{N}\}\right)-T\frac{rs}{K}\log(q)\\
&\stackrel{(b)}{\leq} \sum_{i=1}^{|\mathcal{N}|} H(A(x_i))-T\frac{rs}{K}\log(q)\\
&=|\mathcal{N}|\frac{rs}{K}\log(q)-T\frac{rs}{K}\log(q) \stackrel{(c)}{\leq} 0, \label{eq:mi_first_component}
\end{align}
where (a) follows from the fact that $R_i$'s are independent from each other and from $A$, (b) is due to the fact that joint entropy of several random variables is upper bounded by the sum of the individual entropies of these random variables and (c) is due to $\mathcal{N}\leq T$.

We can bound $I(B;\{\mathcal{B}_i:i\in \mathcal{N}\})$ similarly as follows.
\begin{align}
&I\left(B;\{\mathcal{B}_i:i\in \mathcal{N}\}\right)\nonumber\\
&=H\left(\{\mathcal{B}_i:i\in \mathcal{N}\}\right)-H\left(\{\mathcal{B}_i:i\in \mathcal{N}\}|B\right)\\
&=H\left(\{\mathcal{B}_i:i\in \mathcal{N}\}\right)-H(\{S_{i,j}:i\in[T],j\in[m]\}|B)\\
&=H\left(\{\mathcal{B}_i:i\in \mathcal{N}\}\right)-Tm\frac{st}{L}\log(q)\\
&\leq \sum_{i=1}^{|\mathcal{N}|} \sum_{j=1}^m H(B(x_i,y_j))-Tm\frac{st}{L}\log(q)\\
&=|\mathcal{N}|m\frac{st}{L}\log(q)-Tm\frac{st}{L}\log(q) \leq 0. \label{eq:mi_second_component}
\end{align}
The claim follows by substituting \eqref{mi_first_component} and \eqref{mi_second_component} into \eqref{thm1_proof_mutual_inf}.
\vspace{-5pt}
\end{proof}
\vspace{-5pt}
\subsection{Computation}\label{subsec:Computation}

Worker $i$ multiplies $A(x_{i})$ and $\partial_{j-1} B(x_{i},y_{i})$ with the increasing order of $j\in[m]$. That is, $j^{th}$ completed computation is $A(x_{i})\partial_{j-1} B(x_{i},y_{i})$. As soon as each multiplication
is completed, its result is communicated back to the master.
\vspace{-5pt}
\subsection{Decoding}

% TODO: In \cite{bitar2020rateless}, depending on the %received responses,
%decoding can be done on the fly. When the last %computation is received,
%most of the submatrices of $AB$ can be already decoded. %This is also
%the case in product codes. To compete %\cite{bitar2020rateless}, instead
%of just saying we use bivaraite polynomial %interpolation, we can provide
%more details of the decoding process so that our %proposed method can
%compute such on-the-fly decoding approaches. 

After collecting sufficiently many computations from the workers, the master can interpolate $A(x)B(x,y)$. Note that, in our scheme, every computation is equally useful; that is, the sub-tasks are one-to-any replaceable. In the following theorem, we give the recovery threshold expression, which specifies the minimum number of required computations and characterizes the probability of decoding failure, i.e., bivariate polynomial interpolation, due to the use of a finite field. 

\begin{thm} \label{thm:r_th}
Assume the evaluation points $(x_i,y_i)$ are chosen uniform
randomly over the elements of $\mathbb{F}$. If the number of computations of sub-tasks received from the workers, which obey
the computation order specified in \subsecref{Computation} is greater than the recovery threshold $R_{th} \triangleq (K+T)L+m(K+T-1)$, then with probability at least $1-d/q$,
the master can interpolate the unique polynomial $A(x)B(x,y)$,
where
\begin{multline}
\label{eq:thm2}
d\triangleq \frac{m}{2}\left(3(K+T)^{2}+m(K+T)-8K-6T-m+3\right)
+\frac{(K+T)L}{2}\left(K+L+T-2\right).
\end{multline}

\end{thm}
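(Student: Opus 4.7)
My approach is to reduce the theorem to a statement about the non-vanishing of a bivariate Hermite interpolation matrix and then apply the Schwartz--Zippel lemma. First, I would identify the monomial support of $C(x,y)\triangleq A(x)B(x,y)$. Expanding \eqref{a_x} and \eqref{b_xy} and collecting the $x^a y^b$ terms shows that the support is $M=M_1\cup M_2$, where $M_1=\{(a,b):0\leq a\leq K+T-1,\ 0\leq b\leq L-1\}$ captures the products of $A(x)$ with the $y$-only part of $B(x,y)$, and $M_2=\{(a,b):K\leq a\leq 2K+2T-2,\ 0\leq b\leq m-1\}$ captures the products of $A(x)$ with the $S$-part of $B(x,y)$. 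Since $m\leq L$, we have $|M_1\cap M_2|=Tm$, and hence $|M|=(K+T)L+m(K+T-1)=R_{th}$ unknown monomial coefficients to determine.

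Second, I would recast interpolation as a square linear system. Each sub-task result contributes the linear equation
\begin{equation}
\partial_{j-1}C(x_i,y_i)=\sum_{(a,b)\in M}c_{a,b}\,\frac{b!}{(b-j+1)!}\,x_i^{a}\,y_i^{b-j+1}\,\mathbf{1}[b\geq j-1]
\end{equation}
in the unknowns $\{c_{a,b}\}_{(a,b)\in M}$. Collecting any $R_{th}$ such equations obtained in the order specified in \subsecref{Computation} into a square matrix $V$, it suffices to show that $\det V\neq 0$ with probability at least $1-d/q$.

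The main obstacle is showing that $\det V$, viewed as a polynomial in the evaluation variables $\{(x_i,y_i)\}$, is not identically zero. I would establish this by exhibiting a concrete configuration of $(x_i,y_i)$ for which the Hermite interpolation problem is well-posed. A natural route is a staircase argument: after ordering the monomials $(a,b)$ lexicographically (say by $(b,a)$) and the rows lexicographically by (worker index, derivative order), the matrix decomposes into blocks whose diagonals are confluent Vandermonde/Vandermonde-type submatrices separately indexed by the $M_1$-columns and the $M_2\setminus M_1$-columns; invertibility of these blocks at a generic evaluation point certifies that $\det V$ is a non-zero polynomial. Alternatively, one may adapt the interpolation argument used for the plain bivariate polynomial codes in \cite{hasircioglu2020bivariate}, treating the additional $R_i$- and $S_{i,j}$-induced monomials as a controlled enlargement of the support.

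Finally, a degree count in the evaluation variables yields the claimed $d$. The row corresponding to the $(j-1)$-th $y$-derivative at $(x_i,y_i)$ has non-zero entries of degree at most $a+(b-j+1)$ in $(x_i,y_i)$, so choosing for each row the entry matched with a column in the staircase ordering above gives an upper bound on $\deg\det V$ equal to the sum of these degrees over the $R_{th}$ chosen entries. That sum splits into two pieces: the $(K+T)L$ rows paired with $M_1$-columns contribute, by standard arithmetic-series sums, $\tfrac{(K+T)L}{2}(K+L+T-2)$, while the $m(K+T-1)$ rows paired with columns in $M_2\setminus M_1$ simplify analogously to $\tfrac{m}{2}(3(K+T)^2+m(K+T)-8K-6T-m+3)$, so that $\deg\det V\leq d$. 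Applying Schwartz--Zippel to the uniformly random $(x_i,y_i)$ then gives decoding failure probability at most $d/q$, completing the proof.
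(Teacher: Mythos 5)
Your overall plan---identify the monomial support $M=M_1\cup M_2$ and count $|M|=R_{th}$, recast interpolation as a square linear system with determinant a polynomial in the evaluation points, show that polynomial is not identically zero, bound its degree, and apply Schwartz--Zippel---is exactly the paper's strategy, and your degree count in the final step is correct and reproduces the paper's computation (which the paper carries out via an auxiliary Gauss-sum lemma: the first piece $\tfrac{(K+T)L}{2}(K+L+T-2)$ is the sum of total degrees of the monomials in $M_1$, and the second piece is the sum over $M_2\setminus M_1$).

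The genuine gap is in the non-vanishing step. Your ``staircase argument''---that after lexicographic ordering the interpolation matrix decomposes into blocks whose diagonals are confluent Vandermonde submatrices---does not hold. The columns are indexed by the non-rectangular support $M$ and the rows come from scattered nodes, with each worker $i$ contributing a pencil of $y$-derivatives of orders $0,\dots,m_i-1$ at a single point $(x_i,y_i)$ where $m_i\leq m$ is arbitrary subject only to $\sum_i m_i=R_{th}$. There is no product structure tying rows to columns, so no decomposition into Vandermonde-like diagonal blocks is available; indeed, generic poisedness of bivariate Hermite interpolation schemes is precisely the nontrivial content here and is known to fail for many ``natural'' node--monomial pairings. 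The paper establishes non-vanishing by a Taylor-expansion/coalescence argument (adapted from the non-secure bivariate polynomial codes paper to $\mathbb{F}_q$): expand $\det M$ around one pivot node in the variables of another node, show that the coefficient corresponding to a carefully chosen \emph{unique shift order} reduces to the determinant of a smaller interpolation matrix in which one node has been absorbed, and iterate until a single node remains whose derivative set fills the entire derivative-order space, giving a triangular and hence non-singular matrix. The machinery of regular simple shifts, derivative paths, and lower sets is what guarantees a unique shift exists at every coalescence step. Your alternative---``adapt the interpolation argument from \cite{hasircioglu2020bivariate}''---is in fact what must be done, but it is not a small adaptation: the secure scheme's monomial support is the union of two rectangles of different heights ($L$ versus $m$), and one must verify that the lower-set invariant and the uniqueness of the shift survive the transition between the $M_1$ region and the $M_2\setminus M_1$ region, which is exactly what the paper's case analysis on $p_x\lessgtr K+T-1$ addresses. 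Without carrying that out, the proof is incomplete.
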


We give the proof sketch of \thmref{r_th} in \secref{main_proof}. \thmref{r_th} states that we can make the probability of failure arbitrarily small by increasing the order $q$ of the finite field.

\begin{thm}\label{thm:upload_cost}
The total upload cost of the SBP coding scheme is $N\left(rs/K+mst/L\right)\log_2(q)$ bits.
\end{thm}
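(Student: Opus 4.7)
The plan is to perform a direct bit count of everything the master transmits and then sum over workers. First I would recall the encoding description of \subsecref{encoding}: to worker $i$ the master sends one evaluation $A(x_i)$ of the univariate encoding polynomial, together with the set $\mathcal{B}_i = \{B(x_i,y_i), \partial_1 B(x_i,y_i), \dots, \partial_{m-1} B(x_i,y_i)\}$ of $m$ partial-derivative evaluations of the bivariate encoding polynomial. Since $A(x)$ has matrix coefficients in $\mathbb{F}_q^{\frac{r}{K}\times s}$, each evaluation $A(x_i)$ is itself a matrix of this shape, hence contains $rs/K$ field elements. Similarly, every term of $\mathcal{B}_i$ is a matrix in $\mathbb{F}_q^{s\times \frac{t}{L}}$, so each contributes $st/L$ field elements, and the set $\mathcal{B}_i$ contributes $m\cdot st/L$ field elements in total.

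Second, each element of $\mathbb{F}_q$ takes $\log_2(q)$ bits to represent, so the per-worker upload is $(rs/K + m\,st/L)\log_2(q)$ bits; this agrees with the general expression $(m_{A,i}\,rs/K + m_{B,i}\,st/L)\log_2(q)$ from \secref{Problem-Setting} specialized to $m_{A,i}=1$ and $m_{B,i}=m$, the values dictated by the SBP encoding. Summing this identical per-worker cost over all $N$ workers yields the claimed total of $N(rs/K + m\,st/L)\log_2(q)$ bits. There is essentially no obstacle here: the only point to verify carefully is that no side information (e.g., the evaluation points $x_i,y_i$) need be counted, which is standard in this line of work since the evaluation points are treated as public parameters of the code fixed before execution rather than as part of the transmitted payload.
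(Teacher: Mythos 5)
Your proof is correct and follows essentially the same route as the paper's: count the field elements in $A(x_i)$ and $\mathcal{B}_i$ per worker, multiply by $\log_2(q)$ bits per element, and sum over $N$ workers. The extra remarks (matching the general $m_{A,i},m_{B,i}$ bound and noting that evaluation points are public) are fine but not needed.
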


\begin{proof}
The SBP coding scheme assigns
%In the proposed scheme, since for generating
$m$ computations to each worker,
by sending 
one coded partition of $A$ and $m$ coded partitions of $B$. Remember that each coded partition of $A$ is a matrix of size $\frac{r}{K}\times s$ and each coded partition of $B$ is a matrix of size $s \times \frac{t}{L}$. Since there are $N$ workers, the master uploads $N\left(rs/K+mst/L\right)$ elements of the field $\mathbb{F}$. Since, in total, there are $q$ elements in $\mathbb{F}$, the total upload cost is $N\left(rs/K+mst/L\right)\log_2(q)$ bits. 
\end{proof}

%The upload cost could be reduced with a direct extension of the bivariate schemes presented in \cite{hasircioglu2020bivariate}, in which $m$ computations are assigned to a worker by sending as few as $\sqrt{m}$ coded partitions of $A$ and $B$. However, the direct extension of this scheme to the secure computation case results in a large recovery threshold. For example, the computation scheme in \cite{hasircioglu2020bivariate} has $R_{th}=KL$, and  its direct privacy extension would result in $R_{th}= (K+m_AT)(L+m_BT)$, such that $m_Am_B=m$.
%Considering the other constraints on $m_A$ and $m_B$ discussed in \cite{hasircioglu2020isit}, such as $m_B=L$ or $m_A=1$, 
%Such a recovery threshold is typically much larger than the one obtained with SBP coding scheme in this paper and thus, requires a much larger computation time. %, which motivate %Therefore, instead of naive extensions of bivariate polynomial codes proposed in \cite{hasircioglu2020isit, hasircioglu_globecom_2020}, 
%looking for more alternative code designs, as the one proposed in this work.
%using bivariate polynomial codes as done in this paper is required.

\begin{rem}\label{rem:model-independ}
%The bivariate scheme we introduce in this section does not use any parameter of the underlying statistical model of the workers' speeds and, therefore, is model-independent. We expect the SBP coding scheme to work for large varieties of statistical models since our approach is based on maximizing the number of sub-tasks a worker can complete under an upload cost constraint, which helps the master to collect the minimum number of computations to decode the desired computation, i.e., $R_{th}$, in a shorter amount of time. Note that, while numerically evaluating the proposed scheme in \secref{simulations}, using the shifted exponential model does not break the model-independent nature of the scheme since this model is used only for the numerical evaluations. The code design procedure is still independent of any statistical model.

The SBP scheme does not exploit any  parameter of the underlying statistical model of the workers' speeds. Under a total upload cost constraint, if no prior information about the computation speeds of the workers is available, then assigning more computation load, $m$, to every worker is a favorable approach. Although this increases the recovery threshold, i.e., the term $m(K+T-1)$, the faster workers do not run out of computations easily, avoiding the slowest workers dominating the computation time. The benefit of this prevails over the detriment due to the increase in the recovery threshold. Surely, if prior information about the computation speeds of the workers is available, we could exploit it by assigning more, but still less than $L$, computations to faster workers, which would result in fewer number of coded partitions leaked to the colluding workers. In this case, the recovery threshold would be lower, further increasing the protection against stragglers. However, the SBP scheme has been designed as agnostic to the delay model of the workers and  specifically to maximize the number of sub-tasks delivered by a worker under an upload cost constraint.
Thanks to the extra computations at workers, we show in our simulation results that a model-independent version of SBP is enough to beat model-dependent schemes such as the one in \cite{bitar2020rateless}. Thus, we expect the SBP scheme to work for large varieties of statistical models of the worker's speeds. 
\end{rem}

\vspace{-5pt}
\section{Extension of GASP codes to multi-message setting}
\label{sec:discussion}

%\begin{rem}
%\label{rem:extension}
State of the art schemes in SDMM \cite{chang2018capacity, kakar2018rate} are combined and improved in \cite{d2020gasp}, referred to as GASP codes. Originally, GASP codes are designed for the single-message scenario, in which each worker is assigned a single computation task. In this section, we extend the GASP codes to the multi-message setting, which we call multi-message GASP (MM-GASP) scheme. The encoding polynomials for the GASP codes are $$A(x)=\sum_{i=1}^{K}A_i x^{\alpha_i}+\sum_{i=1}^{T}R_i x^{\alpha_{K+i}},$$
$$B(x)=\sum_{i=1}^{L}B_i x^{\beta_i}+\sum_{i=1}^{T}S_i x^{\beta_{L+i}},$$ where $R_i$'s and $S_i$'s are random matrix partitions, and $\alpha_i$'s and $\beta_i$'s are determined such that \color{\revisioncolor} $A_iB_j$, $\forall i \in[K], \forall j\in[L]$ can be decoded and \color{black} the number of monomials whose coefficients consist of undesired terms, such as multiplications involving $R_i$'s and $S_i$'s, in $A(x)B(x)$, are minimized. We do not cover the details of how $\alpha_i$'s and $\beta_i$'s are determined, but as a result of the process detailed in \cite{d2020gasp}, the recovery threshold becomes $R_{th}^{GASP}\color{\revisioncolor}(K,L,T)\color{black}=$
\begin{equation}
\label{eq:r_th_gasp}
\begin{cases}
\begin{array}{lcc}
KL+K+L,&  & 1=T<L\leq K\\
KL+K+L+T^{2}+T-3,&  & 1<T<L\leq K\\
\left(K+T\right)(L+1)-1,&  & L\leq T<K\\
2KL+2T-1,&  & L\leq K\leq T.
\end{array}\end{cases}    
\end{equation}

For the extension of GASP codes to the multi-message setting, i.e., MM-GASP, we assign $m>1$ tasks to each worker. Thus, a worker can see $m$ evaluations of $A(x)$ and $B(x)$, and any $T$ colluding workers can see $mT$ evaluations. %which may leak information about the matrices $A$ and $B$ since the original scheme can protect against the leakage of at most $T$ evaluations per matrix. 
Thus, to make the scheme secure against $T$ colluding workers, we need to add $mT$ random matrix partitions to each encoding polynomial, instead of $T$. Thus, we have
$$A(x)=\sum_{i=1}^{K}A_i x^{\alpha_i}+\sum_{i=1}^{mT}R_i x^{\alpha_{K+i}},$$
$$B(x)=\sum_{i=1}^{L}B_i x^{\beta_i}+\sum_{i=1}^{mT}S_i x^{\beta_{L+i}}.$$

\begin{thm}
\label{thm:r_th_gasp_mm}
The recovery threshold of MM-GASP is given by $R_{th}^{MM-GASP}(K,L,T)=$
\begin{equation}
\label{eq:r_th_gasp_mm}
\begin{cases}
\begin{array}{lcc}
KL+K+L,&  & 1=mT<L\leq K\\
KL+K+L+(mT)^{2}+mT-3,&  & 1<mT<L\leq K\\
\left(K+mT\right)(L+1)-1,&  & L\leq mT<K\\
2KL+2mT-1,&  & L\leq K\leq mT.
\end{array}\end{cases}
\end{equation}
\end{thm}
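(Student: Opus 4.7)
The plan is to reduce the recovery threshold computation for MM-GASP to a direct instance of the GASP recovery threshold formula \eqref{r_th_gasp} via a parameter substitution $T \mapsto mT$. Observe that the MM-GASP encoding polynomials are syntactically identical to the GASP polynomials associated with the parameter triple $(K, L, mT)$: the number of data monomials ($K$ for $A(x)$ and $L$ for $B(x)$) is unchanged, while the number of random-mask monomials is inflated from $T$ to $mT$. Moreover, the exponents $\alpha_i$ and $\beta_i$ used in MM-GASP are chosen by the same construction in \cite{d2020gasp} applied with $mT$ in place of $T$. Hence, the multiset of monomials appearing in the product $A(x)B(x)$, and in particular the count of distinct monomials that must be recovered, coincides exactly with that of GASP($K, L, mT$).

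Next, I would justify that nothing outside the polynomial structure obstructs this reduction. For privacy, any $T$ colluding workers collectively observe $mT$ evaluations of $A(x)$ and $mT$ evaluations of $B(x)$, since each worker receives $m$ evaluations of each polynomial; this is precisely the amount of leakage that $mT$ random masks are designed to neutralize, and the mutual-information argument mirrors the one used in the proof of Theorem~1. For decodability, the workers' evaluation points can be chosen distinct over $\mathbb{F}$ whenever $q$ is large enough (specifically $q \geq mN$ if we assign $m$ distinct evaluation points per worker across all $N$ workers), which matches the standing assumption on the field size in \cite{d2020gasp}.

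With the structural and privacy reductions in place, the proof is completed by substituting $T \mapsto mT$ into each of the four branches of \eqref{r_th_gasp}. The case boundaries in \eqref{r_th_gasp} are all statements comparing $T$ to $L$ and $K$, which transform into the corresponding comparisons of $mT$ to $L$ and $K$ that appear in \eqref{r_th_gasp_mm}; and each branch's arithmetic expression is a polynomial in $T$, so the substitution produces exactly the expression in $mT$ stated in \eqref{r_th_gasp_mm}.

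The main obstacle is really only bookkeeping: one must verify that the optimization procedure determining the $\alpha_i$'s and $\beta_i$'s in \cite{d2020gasp} treats its ``number of random masks'' purely as a numerical parameter, so that running it with $mT$ masks is operationally indistinguishable from running the original GASP scheme against $mT$ colluders. Since that procedure in \cite{d2020gasp} takes the count of random terms as an input and does not otherwise exploit the fact that this count equals the collusion parameter, the substitution is valid and no new combinatorial analysis is required beyond that already carried out for GASP.
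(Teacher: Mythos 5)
Your proposal is correct and takes essentially the same approach as the paper: both reduce MM-GASP to the classical GASP recovery-threshold formula by setting $\tilde{T}=mT$ and observing that the GASP exponent-selection procedure depends only on the count of random-mask terms, so the four cases of \eqref{r_th_gasp} carry over verbatim under the substitution $T\mapsto mT$. The extra remarks you add about privacy leakage and field size are compatible with the paper's reasoning but are not part of its proof, which confines itself to the polynomial/monomial-counting reduction.
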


\begin{proof}
We can define $\tilde{T}= mT$ and write
\begin{align}
    A(x)=&\sum_{i=1}^{K}A_i x^{\alpha_i}+\sum_{i=1}^{\tilde{T}}R_i x^{\alpha_{K+i}},\\ B(x)=&\sum_{i=1}^{L}B_i x^{\beta_i}+\sum_{i=1}^{\tilde{T}}S_i x^{\beta_{L+i}}.  
\end{align}
Now let us consider 
\begin{multline}
A(x)B(x)=\sum_{i=1}^{K}\sum_{j=1}^{L}A_iB_jx^{\alpha_i+\beta_i} + \sum_{i=1}^{K}\sum_{j=1}^{\tilde{T}}A_iS_jx^{\alpha_i+\beta_{L+i}}
+\sum_{i=1}^{\tilde{T}}\sum_{j=1}^{L}R_iB_jx^{\alpha_{K+i}+\beta_i}\\ + \sum_{i=1}^{\tilde{T}}\sum_{j=1}^{\tilde{T}}R_iS_jx^{\alpha_{K+i}+\beta_{L+i}}.
\end{multline}
According to the proof of \eqref{r_th_gasp} in \cite{d2020gasp}, $\alpha_i$'s and $\beta_i$'s are chosen such that: 1) from the evaluations of $A(x)B(x)$, $A_iB_j$'s $\forall i\in[K], j\in[L]$ are decodable and 2) the number of monomials in $A(x)B(x)$ whose coefficients are undesired terms are minimized. For this, we only need to consider the structure of $A(x)B(x)$ and $m$ itself is not related other than determining the value of $\tilde{T}$. Therefore, the problem reduces to deriving the recovery threshold of a classical GASP coding scheme when $\tilde{T}$ workers collude, which is $R_{th}^{GASP}(K,L,\tilde{T})$ by \eqref{r_th_gasp}. If we substitute $\tilde{T}=mT$, then we obtain \eqref{r_th_gasp_mm}.

\end{proof}

%An example comparison, representing also the general behaviour, of the recovery thresholds of the proposed scheme and the multi-message extension of \cite{d2020gasp} is given in \figref{r_th} with respect to $m$, and upload cost budget per worker, given in the number of matrix partitions for simplicity, assuming the partitions of $A$ and $B$ have the same size.
%\end{rem}

\begin{rem}\label{rem:uploadcostmmGasp}
In multi-message univariate polynomial coding schemes, such as in MM-GASP codes we have just introduced, if a worker is assigned $m$ sub-tasks, then $m$ coded partitions of both $A$ and $B$ are required. Thus, the total upload cost of MM-GASP is $Nm\left(rs/K+st/L\right)\log_2(q)$ bits, which is larger than that of SBP coding scheme.
\end{rem}

%The recovery thresholds of the MM-GASP codes and SBP codes are comparable. For example, when $L\leq mT<K$, SBP coding scheme's recovery threshold is smaller than that of the MM-GASP code if $K\leq TL+1$, which is highly probable especially when $K$ and $L$ are close to each other. Moreover, for $L\leq K\leq mT$, SBP coding scheme's recovery threshold is also smaller if $(K-T)(L-m)\geq (1-m)$ is satisfied, which is also highly probable in typical scenarios. To show that the recovery thresholds of the two schemes are comparable, in \figref{r_th_vs_m}, we provide the recovery thresholds of the two schemes as a function of the number of computations allocated to each worker for $K=L=100$ and $T=30$.

The recovery thresholds of the MM-GASP codes and SBP codes can be compared as a function of the number of coded partitions $m$, by direct inspection of the recovery thresholds in \eqref{r_th_gasp_mm} and \thmref{r_th}. Observe that SBP coding scheme's recovery threshold is smaller than that of the MM-GASP code if $L\leq mT<K$, and  $K\leq TL+1$, which is satisfied as $K$ and $L$ become close to each other, or, if $L\leq K\leq mT$, and $(K-T)(L-m)\geq (1-m)$ is satisfied. In \figref{r_th_vs_m}, we provide the recovery thresholds of the two schemes as a function of the number of computations allocated to each worker for $K=L=100$ and $T=30$.

\begin{figure}[t]
    \centering
    \resizebox{!}{0.5\linewidth}{
    % This file was created by tikzplotlib v0.9.6.
\begin{tikzpicture}[scale=0.75]

\definecolor{color0}{rgb}{0.12156862745098,0.466666666666667,0.705882352941177}
\definecolor{color1}{rgb}{1,0.498039215686275,0.0549019607843137}

\begin{axis}[
legend cell align={left},
legend style={column sep=1pt, fill opacity=0.8, draw opacity=1, text opacity=1, at={(0.97,0.03)}, anchor=south east, draw=white!80!black},
tick align=outside,
tick pos=left,
x grid style={white!69.0196078431373!black},
xlabel={\normalsize Number of computations per worker (m)},
xmin=-3.9, xmax=103.9,
xtick style={color=black},
y grid style={white!69.0196078431373!black},
ylabel={\normalsize $R_{th}$},
ymin=10386.4, ymax=26679.6,
ytick style={color=black},
grid
]
\addplot [ultra thick, densely dashed, color1]
table {%
1 11127
2 13857
3 18387
4 20239
5 20299
6 20359
7 20419
8 20479
9 20539
10 20599
11 20659
12 20719
13 20779
14 20839
15 20899
16 20959
17 21019
18 21079
19 21139
20 21199
21 21259
22 21319
23 21379
24 21439
25 21499
26 21559
27 21619
28 21679
29 21739
30 21799
31 21859
32 21919
33 21979
34 22039
35 22099
36 22159
37 22219
38 22279
39 22339
40 22399
41 22459
42 22519
43 22579
44 22639
45 22699
46 22759
47 22819
48 22879
49 22939
50 22999
51 23059
52 23119
53 23179
54 23239
55 23299
56 23359
57 23419
58 23479
59 23539
60 23599
61 23659
62 23719
63 23779
64 23839
65 23899
66 23959
67 24019
68 24079
69 24139
70 24199
71 24259
72 24319
73 24379
74 24439
75 24499
76 24559
77 24619
78 24679
79 24739
80 24799
81 24859
82 24919
83 24979
84 25039
85 25099
86 25159
87 25219
88 25279
89 25339
90 25399
91 25459
92 25519
93 25579
94 25639
95 25699
96 25759
97 25819
98 25879
99 25939
};
\addlegendentry{\normalsize \arxiv MM-GASP}
\addplot [very thick, color0]
table {%
1 13129
2 13258
3 13387
4 13516
5 13645
6 13774
7 13903
8 14032
9 14161
10 14290
11 14419
12 14548
13 14677
14 14806
15 14935
16 15064
17 15193
18 15322
19 15451
20 15580
21 15709
22 15838
23 15967
24 16096
25 16225
26 16354
27 16483
28 16612
29 16741
30 16870
31 16999
32 17128
33 17257
34 17386
35 17515
36 17644
37 17773
38 17902
39 18031
40 18160
41 18289
42 18418
43 18547
44 18676
45 18805
46 18934
47 19063
48 19192
49 19321
50 19450
51 19579
52 19708
53 19837
54 19966
55 20095
56 20224
57 20353
58 20482
59 20611
60 20740
61 20869
62 20998
63 21127
64 21256
65 21385
66 21514
67 21643
68 21772
69 21901
70 22030
71 22159
72 22288
73 22417
74 22546
75 22675
76 22804
77 22933
78 23062
79 23191
80 23320
81 23449
82 23578
83 23707
84 23836
85 23965
86 24094
87 24223
88 24352
89 24481
90 24610
91 24739
92 24868
93 24997
94 25126
95 25255
96 25384
97 25513
98 25642
99 25771
};
\addlegendentry{\normalsize \arxiv \proposed}

\end{axis}

\end{tikzpicture}}
    \caption{$R_{th}$ vs. the number of computations assigned to each worker for SBP coding scheme and the MM-GASP scheme for $K=L=100$ and $T=30$.}
    \label{fig:r_th_vs_m}
\end{figure}
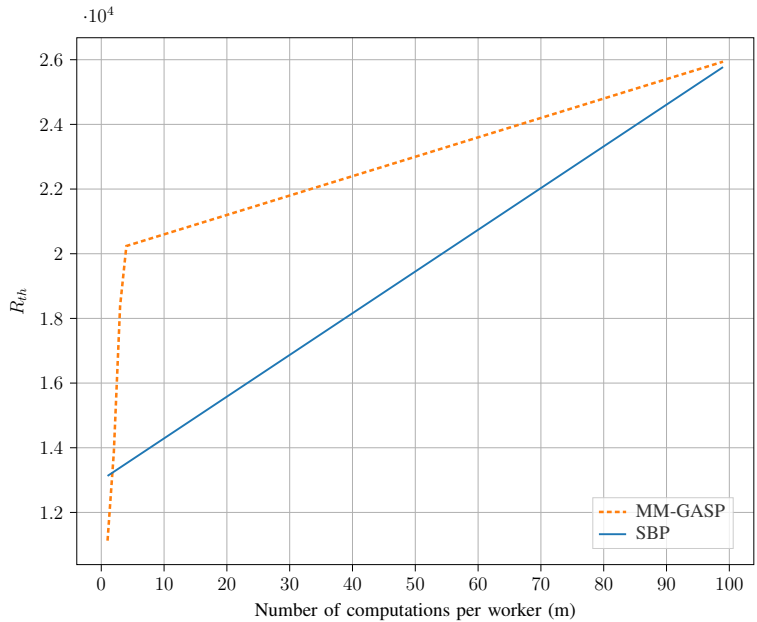

We note that such a comparison may only be meaningful in the unlimited upload cost budget scenario. Otherwise, comparing the recovery thresholds for the same $m$ might be misleading since, for a given upload cost constraint, each scheme provides a different number of sub-tasks, $m$, to workers, as detailed in \thmref{upload_cost} and \remref{uploadcostmmGasp}, for SBP and for MM-GASP, respectively. We provide further discussion on this issue in the next section, see  \figref{r_th}, where we show the recovery thresholds as a function of the total upload cost budget for a scenario with $K=L=100$ and $T=30$. %That figure also provides the average recovery threshold of the scheme in \cite{bitar2020rateless} for a comparison.   

\section{Simulation Results and Discussion}
\label{sec:simulations}

In this section, we compare SBP codes with MM-GASP and the rateless coding scheme proposed in \cite{bitar2020rateless} in terms of the trade-off between the average computation time (ACT) and the total upload cost budget (UCB), under the scenarios with heterogeneous and homogeneous workers.

The comparison between the MM-GASP scheme and SBP coding scheme is direct, as both are based on the same set of assumptions. They achieve different recovery thresholds as a function of the $L,M,T$ and $m$, but they both assume that the coded submatrices are uploaded only once before the computations start, no prior knowledge of the computation speeds of workers is needed or can be exploited, and the first $R_{th}$ results received from any subset of the workers allow recovering the desired computation. However, the setting and the assumptions in \cite{bitar2020rateless} are slightly different. In the rateless coding scheme proposed in \cite{bitar2020rateless}, computations are organized in rounds. If the speeds of the workers are not already known, in the first round, every worker is assigned one computation to estimate their speeds. %in accordance with the description in \cite{bitar2020rateless}.
Then, based on the known or estimated speeds, workers are grouped into $c$ clusters, such that the workers with similar speeds are in the same cluster. We denote by $n_u$  the number of workers belonging to cluster $u$, $u\in [c]$. In each round, for any computation within a cluster to be useful for decoding, we need that all the workers in that cluster and also all workers in cluster one, which is special, to finish their assigned tasks. Once all the workers in cluster $u$ and cluster $1$ finish their tasks, they provide $d_u$, and $d_1$ useful computations to the master, where $d_1=\left\lfloor (n_1 -2T+1)/2 \right\rfloor$ and $d_u=\left \lfloor (n_u -T+1)/2 \right \rfloor$ for $2\leq u \leq c$. No further synchronization is needed among clusters, and a new task can be assigned to a worker as soon as it finishes its assigned task.  Once $KL(1+\epsilon)$ useful computations are collected by the master from different clusters across multiple rounds, % the computation across the workers is stopped and 
the decoding procedure can start. Here, $\epsilon$ is the overhead due to the Fountain codes used in \cite{bitar2020rateless}, which in our simulations is set to 0.05.  
The performance of this scheme
%in \cite{bitar2020rateless}
depends critically on how good the distribution of the workers' speeds can be estimated. Observe that, in the event that a worker in a "fast" cluster straggles, the finishing time of the overall cluster can be arbitrarily delayed. 
%, resulting in the computations completed by the other fast workers in the same cluster cannot be utilized until the slowest one finishes. 
This is the main drawback of this scheme in comparison with SBP coding scheme and the MM-GASP, for which any computation at any worker is equally useful. %independent of other workers. 
%The number of clusters, $c$ is a parameter we may set to a different value for each scenario, but after setting it, as done in \cite{bitar2020rateless}, we do not change the number of clusters since we assume that the distribution of workers' speeds do not change dramatically during the simulation. 
%The number of useful computations obtained from the one round of computation at the fastest cluster is less than those of from the other clusters since more computation overhead due to random matrix partitions to protect privacy are assigned to the fastest cluster. 
The clear advantage of the rateless coding scheme is that the computation load $m_i$, i.e., the number of tasks assigned to worker $i$, does not need to be specified in advance, and tasks can be dynamically allocated to workers in each cluster across rounds. Moreover, the recovery threshold is not dominated by the maximum computation load $m=\max m_i$, as is the case for SBP coding and the MM-GASP schemes.  Therefore, in order to allow the rateless coding scheme to benefit from  this flexibility, in our simulations, we consider a total upload cost for \cite{bitar2020rateless}, i.e., the computations are assigned to clusters until the total upload communication budget is met, while for MM-GASP and SBP we impose an upload cost constraint per worker.
%In the presented figures, we show the total available UCB for a fixed number of workers.
We emphasize that this is a relaxation of the problem formulation introduced in \secref{Problem-Setting}, and is only applied to the rateless coding scheme. Although SBP coding scheme and the MM-GASP code can also benefit from this relaxation when the computation statistics of the workers are known, such optimization is out of the scope of this paper and will be considered in future work.

In our simulations, following the literature \cite{liang2014tofec,lee2017speeding}, we assume that the time for a worker to finish one sub-task is distributed as a shifted exponential random variable with density $f(t)=\lambda e^{-\lambda(t-\nu)}$ for $t\geq \nu$, and $f(t)=0$ otherwise, where the scale parameter $\lambda$ controls the speed of the worker and the shift parameter $\nu$ is the minimum time duration for a task to be completed. Smaller $\lambda$ implies slower workers and more tendency to straggle. In each scenario, we run 1000 experiments independently with the given parameters and present the average computation time. We assume that the partitions of matrices $A$ and $B$ have the same size, i.e., $\frac{r}{K}=\frac{t}{L}$, in all of the scenarios considered. Given that the computation time per sub-task is a fraction $\frac{1}{KL}$ of the complete task, to facilitate the comparison between different configurations, we choose $\lambda \propto KL$, and $\nu \propto \frac{1}{KL}$, in all simulation setups.

\vspace{-5pt}
\subsection{Heterogeneous Workers} \label{subsec:heter}

%In this subsection, we study the trade-off between the UCB and the ACT via numerical simulations. 
In this subsection, we assume that the computation speeds of the workers are heterogeneous. Specifically, we assume six \emph{heterogeneity classes}, with scale parameters $\lambda_1=10^{-1}\times KL$, $\lambda_2=10^{-1.5}\times KL$, $\lambda_3=10^{-2}\times KL$, $\lambda_4=10^{-2.5}\times KL$, $\lambda_5=10^{-3}\times KL$ and $\lambda_6=10^{-3.5}\times KL$, and a common shift parameter of $\nu=10/(KL)$ seconds. There are 75 workers for each class summing up to $N=450$ workers in total, and assume that any subset of at most $T=N/15$ workers can collude. We divide both matrices $A$ and $B$ into $K=L=100$ partitions.
We evaluate the scheme in \cite{bitar2020rateless} for several numbers of clusters, $c$, to observe the effect of the mismatch between the actual number of heterogeneity classes and the chosen $c$ value. While generating the clusters, we simply assign around $N/c$ workers to each cluster, according to the estimated speeds in the first round. We do not change the parameter $c$ across rounds.

First, we assume that workers' scale parameters do not deviate at all from the given parameters across the rounds. We call such workers as \emph{stable workers}. In \figref{heter_stable}, we plot the ACT of the compared schemes versus the total UCB by assuming stable workers. In \figref{r_th}, we also present the actual recovery thresholds of SBP coding scheme, MM-GASP, as well as, the average recovery threshold of the rateless coding scheme for different $c$ values. As the name suggests, the rateless scheme does not have a constant recovery threshold. The actual value depends on the computation speeds of the workers, the number of clusters, and the number of workers assigned to them. Therefore, we present the average recovery threshold for this scheme.

\begin{figure}[t]
    \centering
    \resizebox{0.65\linewidth}{!}{
    % This file was created by tikzplotlib v0.9.6.
\begin{tikzpicture}

\definecolor{color0}{rgb}{0.12156862745098,0.466666666666667,0.705882352941177}
\definecolor{color1}{rgb}{1,0.498039215686275,0.0549019607843137}
\definecolor{color2}{rgb}{0.172549019607843,0.627450980392157,0.172549019607843}
\definecolor{color3}{rgb}{0.83921568627451,0.152941176470588,0.156862745098039}
\definecolor{color4}{rgb}{0.580392156862745,0.403921568627451,0.741176470588235}
\definecolor{color5}{rgb}{0.549019607843137,0.337254901960784,0.294117647058824}

\begin{axis}[
legend cell align={left},
legend style={fill opacity=0.8, draw opacity=1, text opacity=1, draw=white!80!black},
log basis x={10},
log basis y={10},
tick align=outside,
tick pos=left,
x grid style={white!69.0196078431373!black},
xlabel={\normalsize Total upload cost budget (UCB) (matrix partitions)},
ylabel={\normalsize Average computation time (ACT) (seconds)},
xmin=19370.062334964, xmax=522713.857338695,
xmode=log,
xtick style={color=black},
xtick={20000,30000,40000,50000,60000,70000,80000,90000,100000,200000,300000,400000,500000},
xticklabels={\(\displaystyle {2\cdot 10^4}\),\(\displaystyle {}\),\(\displaystyle {}\),\(\displaystyle {}\),\(\displaystyle {}\),\(\displaystyle {}\),\(\displaystyle {}\),\(\displaystyle {}\),\(\displaystyle {\normalsize 1\cdot 10^5}\),\(\displaystyle {}\),\(\displaystyle {}\),\(\displaystyle {}\),\(\displaystyle {\normalsize 4\cdot 10^5}\),},
y grid style={white!69.0196078431373!black},
ymin=0.516119507973762, ymax=6.61479145563243,
ymode=log,
ytick style={color=black},
ytick={0.6,0.7,0.8,0.9,1,2,3,4,5},
yticklabels={\(\displaystyle {0.6}\),\(\displaystyle {}\),\(\displaystyle {}\),\(\displaystyle {}\),\(\displaystyle {10^{0}}\),\(\displaystyle {}\),\(\displaystyle {}\),\(\displaystyle {}\),\(\displaystyle {5}\),},
grid
]
\addplot [very thick, color3, mark=diamond*, mark size=2, mark options={solid}]
table {%
135000 1.68285389063249
180000 1.68253954228528
225000 1.68285389063249
270000 1.68285389063249
315000 1.68285389063249
360000 1.68285389063249
405000 1.68285389063249
450000 1.68285389063249
};
\addlegendentry{\normalsize \arxiv Rateless, c=5}
\addplot [very thick, color5, mark=pentagon*, mark size=2, mark options={solid}]
table {%
180000 1.49655793589903
225000 1.50176343134698
270000 1.49655793589903
315000 1.49655793589903
360000 1.49655793589903
405000 1.49655793589903
450000 1.49655793589903
};
\addlegendentry{\normalsize \arxiv Rateless, c=7}
\addplot [very thick, color0, mark=*, mark size=2, mark options={solid}]
table {%
22500 4.92301699149631
27000 2.9722230615861
31500 1.95500082962587
36000 1.65269501530846
40500 1.36229747117828
45000 1.12120289524777
54000 1.11240248564855
63000 1.11240248564855
72000 1.11240248564855
90000 1.11240248564855
135000 1.11240248564855
180000 1.11240248564855
225000 1.11240248564855
270000 1.11240248564855
315000 1.11240248564855
360000 1.11240248564855
405000 1.11240248564855
450000 1.11240248564855
};
\addlegendentry{\normalsize \arxiv \proposed}
\addplot [very thick, color2, mark=x, mark size=3, mark options={solid}]
table {%
72000 1.0101708248489
90000 1.01347238203575
135000 1.0101708248489
180000 1.0101708248489
225000 1.0101708248489
270000 1.0101708248489
315000 1.0101708248489
360000 1.0101708248489
405000 1.0101708248489
450000 1.0101708248489
};
\addlegendentry{\normalsize \arxiv Rateless, c=3}
\addplot [very thick, color4, mark=triangle, mark size=3, mark options={solid}]
table {%
135000 0.984732478261053
180000 0.983995814349684
225000 0.984732478261053
270000 0.984732478261053
315000 0.984732478261053
360000 0.984732478261053
405000 0.984732478261053
450000 0.984732478261053
};
\addlegendentry{\normalsize \arxiv Rateless, c=6}
\addplot [very thick, color1, mark=square*, mark size=2, mark options={solid}]
table {%
54000 5.88684237668578
63000 3.27680255866737
72000 2.07836106157089
90000 1.12562074968045
135000 0.654368164634346
180000 0.605500951465565
225000 0.578638095330492
270000 0.580545487165599
315000 0.626019983649504
360000 0.671098373537379
405000 0.716801077289796
450000 0.762915812241203
};
\addlegendentry{\normalsize \arxiv MM-GASP}

\end{axis}

\end{tikzpicture}}
    \caption{ACT vs. total UCB trade-off of the compared schemes with heterogeneous and stable workers.}
    \label{fig:heter_stable}
\end{figure}
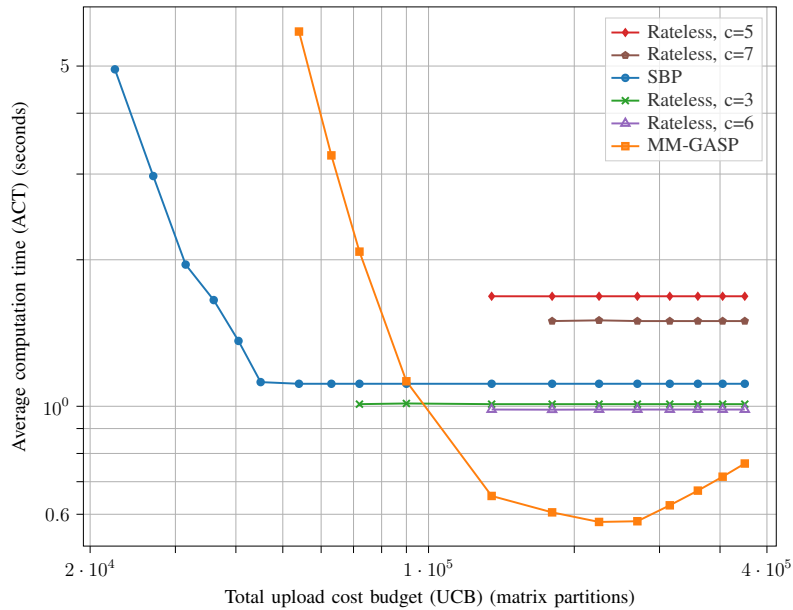

As observed in \figref{heter_stable}, the SBP coding scheme is able to finish the overall task for much lower UCB values than the other two schemes. This is thanks to the fact that SBP is able to provide more computations, $m$, to workers for the same UCB, as highlighted in \remref{model-independ}.
%That is a clear advantage of this scheme, since, even with few coded matrix partitions provided, a worker can provide more computations, i.e., larger $m$. 
Moreover, although by increasing the total UCB we increase $m$ and therefore $R_{th}$, as observed in \figref{r_th}, and thus workers need to provide more computations to the master, the benefit from having more computations at workers pays off and the ACT decreases when UCB increases. 
The reason for this is the heterogeneity of the workers' speeds.
%and the limited number of fast workers in our simulation. 
That is, for a low total UCB, $m$ is so small that the master cannot complete all $R_{th}$ computations from only fast workers. When we increase $m$, the maximum number of computations the fast workers can provide also increases, and the benefit of this increase dominates over the increase in the $R_{th}$. 
%We also observe that after some 
For the SBP scheme, this is so until we reach a total UCB value corresponding to $m=L$ i.e., total UCB of $L\times N=45000$. After this point, the ACT of SBP coding scheme stays constant. This is an inherent limitation of SBP coding scheme since the maximum value of $m$ is $L$. Beyond that value, we cannot benefit from the additional UCB.

%When we investigate the ACT vs. UCB trade-off of the
For  MM-GASP codes, we observe that, although their recovery threshold is close to that of SBP coding scheme in the low UCB regime,
as seen in \figref{r_th}, the minimum total UCB for which MM-GASP codes are able to complete the overall task is larger than SBP coding scheme. That is because the MM-GASP scheme is a univariate scheme; and thus, for the same total UCB, the maximum number of computations a worker can provide is less than the one in SBP coding scheme. For the same reason, at intermediate total UCB availability, i.e., values less than $9\times 10^4$ partitions, the ACT of the MM-GASP scheme is quite large compared to SBP coding scheme. However, for larger values of total UCB, we observe in \figref{heter_stable} that MM-GASP's ACT decreases rapidly, substantially outperforming the other two schemes. However, after a critical point, if the total UCB further increases, then the ACT starts to increase again. After that critical point, the increase in the recovery threshold is not compensated by the additional computations at workers and the ACT starts to increase. 
%Therefore, to operate the MM-GASP in its full efficiency, we need to identify this critical point and not allow increasing $m$ beyond this point.
Unfortunately, operating at this point may not be always possible. 
%finding this point may not be easy,
Especially when we do not have any prior information about the statistics of the workers' speeds. Nevertheless, some heuristics can still be useful to approximate it and even if the optimal point cannot be found, a sufficiently close point can still be beneficial.
%In \figref{heter_stable}, we observe that even for the values of $m$ different than the optimal, the MM-GASP still has the best ACT. 
Thus, we can conclude that, if a good heuristic can be found to identify a near-optimal $m$ value, for large UCB values, MM-GASP codes can complete the overall task faster than SBP coding scheme as well as the rateless coding scheme. This makes MM-GASP codes a good alternative for the scenarios with high UCB availability.

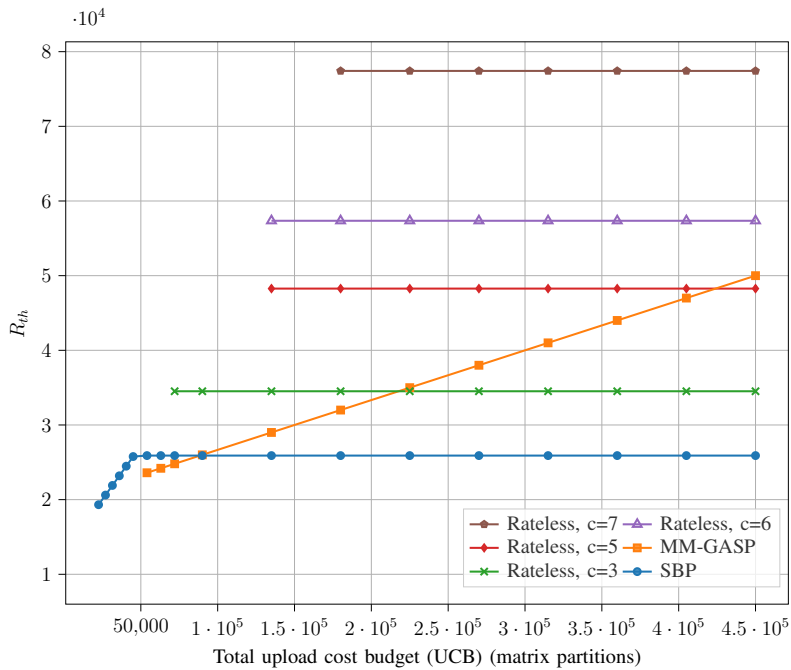
\begin{figure}[t]
    \centering
    \resizebox{0.65\linewidth}{!}{
    % This file was created by tikzplotlib v0.9.6.
\begin{tikzpicture}

\definecolor{color0}{rgb}{0.12156862745098,0.466666666666667,0.705882352941177}
\definecolor{color1}{rgb}{1,0.498039215686275,0.0549019607843137}
\definecolor{color2}{rgb}{0.172549019607843,0.627450980392157,0.172549019607843}
\definecolor{color3}{rgb}{0.83921568627451,0.152941176470588,0.156862745098039}
\definecolor{color4}{rgb}{0.580392156862745,0.403921568627451,0.741176470588235}
\definecolor{color5}{rgb}{0.549019607843137,0.337254901960784,0.294117647058824}

\begin{axis}[
legend cell align={left},
legend style={fill opacity=0.8, draw opacity=1, text opacity=1, draw=white!80!black, at={(0.55,0.1)},anchor=west},
legend columns=2,
tick align=outside,
tick pos=left,
xlabel={\normalsize Total upload cost budget (UCB) (matrix partitions)},
ylabel={\normalsize $R_{th}$},
x grid style={white!69.0196078431373!black},
xmin=1125, xmax=471375,
xtick style={color=black},
scaled x ticks = false,
scaled y ticks = true,
y grid style={white!69.0196078431373!black},
ymin=6000, ymax=81320.2425,
ytick style={color=black},
grid
]

\addplot [very thick, color5, mark=pentagon*, mark size=2, mark options={solid}]
table {%
180000 77422.02
225000 77422.02
270000 77422.02
315000 77422.02
360000 77422.02
405000 77422.02
450000 77422.02
};
\addlegendentry{\normalsize \arxiv Rateless, c=7}
\addplot [very thick, color4, mark=triangle, mark size=3, mark options={solid}]
table {%
135000 57350.25
180000 57350.25
225000 57350.25
270000 57350.25
315000 57350.25
360000 57350.25
405000 57350.25
450000 57350.25
};
\addlegendentry{\normalsize \arxiv Rateless, c=6}
\addplot [very thick, color3, mark=diamond*, mark size=2, mark options={solid}]
table {%
135000 48267
180000 48267
225000 48267
270000 48267
315000 48267
360000 48267
405000 48267
450000 48267
};
\addlegendentry{\normalsize \arxiv Rateless, c=5}
\addplot [very thick, color1, mark=square*, mark size=2, mark options={solid}]
table {%
54000 23599
63000 24199
72000 24799
90000 25999
135000 28999
180000 31999
225000 34999
270000 37999
315000 40999
360000 43999
405000 46999
450000 49999
};
\addlegendentry{\normalsize \arxiv MM-GASP}
\addplot [very thick, color2, mark=x, mark size=3, mark options={solid}]
table {%
72000 34516.5
90000 34516.5
135000 34516.5
180000 34516.5
225000 34516.5
270000 34516.5
315000 34516.5
360000 34516.5
405000 34516.5
450000 34516.5
};
\addlegendentry{\normalsize \arxiv Rateless, c=3}
\addplot [very thick, color0, mark=*, mark size=2, mark options={solid}]
table {%
22500 19321
27000 20611
31500 21901
36000 23191
40500 24481
45000 25771
54000 25900
63000 25900
72000 25900
90000 25900
135000 25900
180000 25900
225000 25900
270000 25900
315000 25900
360000 25900
405000 25900
450000 25900
};
\addlegendentry{\normalsize \arxiv \proposed}

\end{axis}

\end{tikzpicture}}
    \caption{Average $R_{th}$ of the compared schemes with heterogeneous and stable workers}
    \label{fig:r_th}
\end{figure}

Finally, for the rateless codes, as observed for MM-GASP codes, we observe that this scheme starts being able to complete the overall task only at a relatively high total UCB value.
That is because the rateless coding scheme assigns a new sub-task to a worker as soon as it finishes its task without waiting for the other clusters to finish. Thus, the UCB is greedily invested in the fastest cluster. However, despite its speed, in terms of the number of useful computations provided, the fastest cluster is less efficient than the other clusters. Please remember that $d_1=\left\lfloor (n_1 -2T+1)/2 \right\rfloor$, but $d_u=\left \lfloor (n_u -T+1)/2 \right \rfloor$ for $2\leq u \leq c$. Therefore, if the number of workers in the fastest cluster is limited, then for the low UCB values, the rateless scheme cannot complete the overall task since it runs out of the necessary upload resources before the master receives the minimum number of useful computations to decode $AB$, which is $KL(1+\epsilon)$. Moreover, we observe in \figref{r_th} that when the number of clusters is low, the recovery threshold is also lower, and the rateless scheme starts completing the overall task at a lower value of total UCB. That is because when $c$ is low, since we assign $N/c$ workers per cluster, there are more workers in the fastest cluster. However, in \figref{heter_stable}, we also observe that this does not always have a positive impact on the ACT.
On the other hand, when UCB is large enough for rateless codes to complete the overall task, its ACT is slightly better than SBP coding scheme for $c=3$ and $c=6$, but for $c=5$ and $c=7$, SBP coding scheme performs better. In general, we expect that the rateless coding scheme performs well when $c$ is equal to the number of heterogeneity classes, but, in this case, we also observe that it performs equally well for $c=3$. That is because, for $c=3$, there is no $\lambda_i$, $i\in[6]$ appearing in more than one cluster, i.e., workers in the same heterogeneity class are allocated to the same cluster. Therefore, for rateless codes, it is important to choose the design parameter $c$ carefully. In practice, we may not know the number of heterogeneity classes, such a clear grouping of computation statistics may not be possible. For such cases, SBP coding scheme or the MM-GASP may be preferable over the rateless coding scheme. 

In addition to choosing $c$ optimally, estimating the instantaneous speeds of the workers is another issue we need to address in rateless codes. In real-world scenarios, the speeds of the workers can occasionally change due to temporary failures, parallel job assignments, etc. To model this, we introduce another simulation scenario, in which workers' scale parameters can deviate from their original values with a very low probability $\rho$. We refer to such workers as \emph{mostly-stable workers}. That is, in any round, a worker with $\lambda_i$ sticks to $\lambda_i$ with probability $1-\rho$, but with a small probability $\rho$, it draws its scale parameter uniform randomly from $\{\lambda_j \mid j\in [6] \}$. We consider such a scenario to model the instantaneous changes in workers' speeds since the detection of such changes by the master and putting this worker to the correct cluster takes at least one round. Taking $\rho=0.001$, we plot the ACT of the compared schemes in \figref{heter_instable}. 

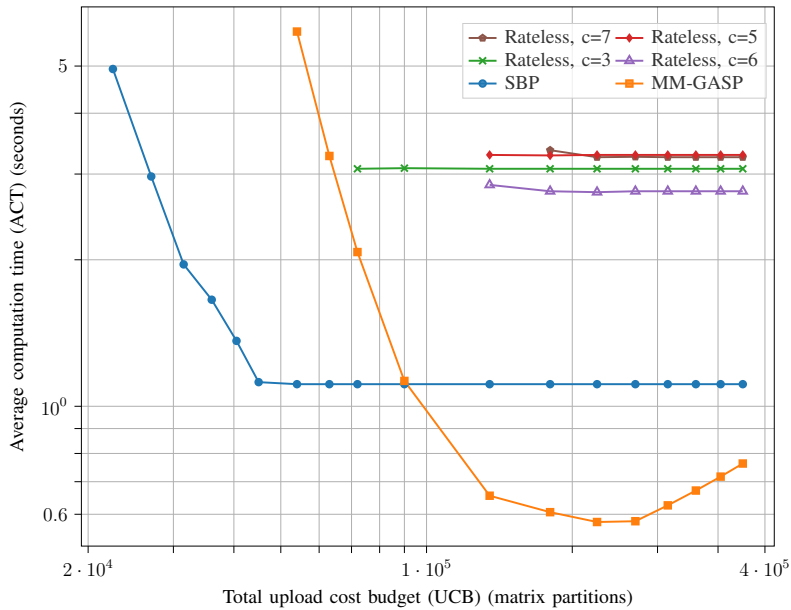
\begin{figure}[t]
    \centering
    \resizebox{0.65\linewidth}{!}{
    % This file was created by tikzplotlib v0.9.6.
\begin{tikzpicture}

\definecolor{color0}{rgb}{0.12156862745098,0.466666666666667,0.705882352941177}
\definecolor{color1}{rgb}{1,0.498039215686275,0.0549019607843137}
\definecolor{color2}{rgb}{0.172549019607843,0.627450980392157,0.172549019607843}
\definecolor{color3}{rgb}{0.83921568627451,0.152941176470588,0.156862745098039}
\definecolor{color4}{rgb}{0.580392156862745,0.403921568627451,0.741176470588235}
\definecolor{color5}{rgb}{0.549019607843137,0.337254901960784,0.294117647058824}

\begin{axis}[
legend cell align={left},
legend style={fill opacity=0.8, draw opacity=1, text opacity=1, draw=white!80!black,at={(0.55,0.9)},anchor=west},
legend columns=2,
log basis x={10},
log basis y={10},
tick align=outside,
tick pos=left,
x grid style={white!69.0196078431373!black},
xlabel={\normalsize Total upload cost budget (UCB) (matrix partitions)},
ylabel={\normalsize Average computation time (ACT) (seconds)},
xmin=19370.062334964, xmax=522713.857338695,
xmode=log,
xtick style={color=black},
xtick={20000,30000,40000,50000,60000,70000,80000,90000,100000,200000,300000,400000,500000},
xticklabels={\(\displaystyle {2\cdot 10^4}\),\(\displaystyle {}\),\(\displaystyle {}\),\(\displaystyle {}\),\(\displaystyle {}\),\(\displaystyle {}\),\(\displaystyle {}\),\(\displaystyle {}\),\(\displaystyle {\normalsize1\cdot 10^5}\),\(\displaystyle {}\),\(\displaystyle {}\),\(\displaystyle {}\),\(\displaystyle {\normalsize 4\cdot 10^5}\),},
y grid style={white!69.0196078431373!black},
ymin=0.516119507973762, ymax=6.61479145563243,
ymode=log,
ytick style={color=black},
ytick={0.6,0.7,0.8,0.9,1,2,3,4,5},
yticklabels={\(\displaystyle {0.6}\),\(\displaystyle {}\),\(\displaystyle {}\),\(\displaystyle {}\),\(\displaystyle {10^{0}}\),\(\displaystyle {}\),\(\displaystyle {}\),\(\displaystyle {}\),\(\displaystyle {5}\),},
grid
]
\addplot [very thick, color5, mark=pentagon*, mark size=2, mark options={solid}]
table {%
180000 3.35844822666207
225000 3.24897571909394
270000 3.25481990729134
315000 3.24897571909394
360000 3.24897571909394
405000 3.24897571909394
450000 3.24897571909394

};
\addlegendentry{\normalsize \arxiv Rateless, c=7}
\addplot [very thick, color3, mark=diamond*, mark size=2, mark options={solid}]
table {%
135000 3.28421984567467
180000 3.27623931678182
225000 3.28421984567467
270000 3.28421984567467
315000 3.28421984567467
360000 3.28421984567467
405000 3.28421984567467
450000 3.28421984567467
};
\addlegendentry{\normalsize \arxiv Rateless, c=5}
\addplot [very thick, color2, mark=x, mark size=3, mark options={solid}]
table {%
72000 3.07573119635437
90000 3.08459522144747
135000 3.07573119635437
180000 3.07573119635437
225000 3.07573119635437
270000 3.07573119635437
315000 3.07573119635437
360000 3.07573119635437
405000 3.07573119635437
450000 3.07573119635437
};
\addlegendentry{\normalsize \arxiv Rateless, c=3}

\addplot [very thick, color4, mark=triangle, mark size=3, mark options={solid}]
table {%
135000 2.8500594703337
180000 2.76561373748261
225000 2.75347746450472
270000 2.76561373748261
315000 2.76561373748261
360000 2.76561373748261
405000 2.76561373748261
450000 2.76561373748261
};
\addlegendentry{\normalsize \arxiv Rateless, c=6}
\addplot [very thick, color0, mark=*, mark size=2, mark options={solid}]
table {%
22500 4.93108040137729
27000 2.96566253977127
31500 1.95684858838862
36000 1.65559995616741
40500 1.3626600846039
45000 1.12102161692354
54000 1.11024703820886
63000 1.11024703820886
72000 1.11024703820886
90000 1.11024703820886
135000 1.11024703820886
180000 1.11024703820886
225000 1.11024703820886
270000 1.11024703820886
315000 1.11024703820886
360000 1.11024703820886
405000 1.11024703820886
450000 1.11024703820886
};
\addlegendentry{\normalsize \arxiv \proposed}
\addplot [very thick, color1, mark=square*, mark size=2, mark options={solid}]
table {%
54000 5.88669805836413
63000 3.2680315819786
72000 2.07429370663684
90000 1.12748639436715
135000 0.655047204711411
180000 0.606029423176481
225000 0.578439469686378
270000 0.580629458773439
315000 0.625896885928243
360000 0.671290469885768
405000 0.717007620276599
450000 0.762871224303773
};
\addlegendentry{\normalsize \arxiv MM-GASP}

\end{axis}

\end{tikzpicture}}
    \caption{ACT vs. total UCB trade-off of the compared schemes with heterogeneous and mostly-stable workers.}
    \label{fig:heter_instable}
\end{figure}

We observe that even with such a small probability deviation from the estimated scale parameters, the performance of \cite{bitar2020rateless} degrades considerably. Thus, we can argue that, in addition to the substantial improvement in the low and the intermediate UCB values, SBP coding scheme can be advantageous over \cite{bitar2020rateless} in the presence of a high UCB as well depending on the statistics of the workers' speeds.

\subsection{Homogeneous Workers}

In this subsection,  we assume that the computation speeds of the workers are homogeneous, and we compare the ACTs of the considered schemes with respect to the available total UCB. That is, we have 450 workers as in \subsecref{heter}, but this time, all the workers follow the same computation statistics with $\lambda=10^{-2}\times KL$ and $\nu= 10/(KL)$. We assume at most $T=N/15$ workers can collude, and we divide $A$ and $B$ into $K=L=100$ partitions. 
For the rateless scheme in  \cite{bitar2020rateless}, although, the workers' speeds are homogeneous, we consider different numbers of clusters  $c\in [3]$ in order to analyse its effect. In \figref{homo_stable}, we present the ACT versus UCB plot for this setting. % assuming the workers are stable, i.e. th

\begin{figure}[t]
    \centering
    \resizebox{0.65\linewidth}{!}{
    % This file was created by tikzplotlib v0.9.6.
\begin{tikzpicture}

\definecolor{color0}{rgb}{0.12156862745098,0.466666666666667,0.705882352941177}
\definecolor{color1}{rgb}{1,0.498039215686275,0.0549019607843137}
\definecolor{color2}{rgb}{0.172549019607843,0.627450980392157,0.172549019607843}
\definecolor{color3}{rgb}{0.83921568627451,0.152941176470588,0.156862745098039}
\definecolor{color4}{rgb}{0.580392156862745,0.403921568627451,0.741176470588235}
\definecolor{color5}{rgb}{0.549019607843137,0.337254901960784,0.294117647058824}

\begin{axis}[
legend cell align={left},
legend style={fill opacity=0.8, draw opacity=1, text opacity=1, draw=white!80!black,at={(0.02,0.88)},anchor=west},
legend columns=1,
log basis x={10},
log basis y={10},
tick align=outside,
tick pos=left,
x grid style={white!69.0196078431373!black},
xlabel={\normalsize Total upload cost budget (UCB) (matrix partitions)},
ylabel={\normalsize Average computation time (ACT) (seconds)},
xmin=19370.062334964, xmax=522713.857338695,
xmode=log,
xtick style={color=black},
xtick={20000,30000,40000,50000,60000,70000,80000,90000,100000,200000,300000,400000,500000},
xticklabels={\(\displaystyle {2\cdot 10^4}\),\(\displaystyle {}\),\(\displaystyle {}\),\(\displaystyle {}\),\(\displaystyle {}\),\(\displaystyle {}\),\(\displaystyle {}\),\(\displaystyle {}\),\(\displaystyle {\normalsize1 \cdot 10^5}\),\(\displaystyle {}\),\(\displaystyle {}\),\(\displaystyle {}\),\(\displaystyle {\normalsize 4\cdot 10^5}\),},
y grid style={white!69.0196078431373!black},
ymin=0.458187437299291, ymax=1.28154411705036,
ymode=log,
ytick style={color=black},
ytick={0.5,0.6,0.7,0.8,0.9,1},
yticklabels={\(\displaystyle {}\),\(\displaystyle {0.6}\),\(\displaystyle {}\),\(\displaystyle {}\),\(\displaystyle {}\),\(\displaystyle {10^{0}}\)},
grid
]
\addplot [very thick, color1, mark=square*, mark size=2, mark options={solid}]
table {%
54000 0.583593840946327
63000 0.593029512157708
72000 0.607047226537189
90000 0.636346681050562
135000 0.710139496019787
180000 0.783164158715691
225000 0.856593552014248
270000 0.929570264259294
315000 1.00255488836052
360000 1.07620215855886
405000 1.14977186521142
450000 1.22315870269053
};
\addlegendentry{\normalsize \arxiv MM-GASP}

\addplot [very thick, color4, mark=triangle, mark size=3, mark options={solid}]
table {%
63000 0.939662935839085
72000 0.937977304011021
90000 0.939662935839085
135000 0.939662935839085
180000 0.939662935839085
225000 0.939662935839085
270000 0.939662935839085
315000 0.939662935839085
360000 0.939662935839085
405000 0.939662935839085
450000 0.939662935839085
};
\addlegendentry{\normalsize \arxiv Rateless, c=3}

\addplot [very thick, color3, mark=diamond*, mark size=2, mark options={solid}]
table {%
54000 0.889415655522552
63000 0.878449805545752
72000 0.87934068696157
90000 0.878449805545752
135000 0.878449805545752
180000 0.878449805545752
225000 0.878449805545752
270000 0.878449805545752
315000 0.878449805545752
360000 0.878449805545752
405000 0.878449805545752
450000 0.878449805545752
};
\addlegendentry{\normalsize \arxiv Rateless, c=2}
\addplot [very thick, color2, mark=x, mark size=3, mark options={solid}]
table {%
54000 0.846740337530054
63000 0.840348559774894
72000 0.846740337530054
90000 0.846740337530054
135000 0.846740337530054
180000 0.846740337530054
225000 0.846740337530054
270000 0.846740337530054
315000 0.846740337530054
360000 0.846740337530054
405000 0.846740337530054
450000 0.846740337530054
};
\addlegendentry{\normalsize \arxiv Rateless, c=1}

\addplot [very thick, color0, mark=*, mark size=2, mark options={solid}]
table {%
22500 0.479596601295886
27000 0.505364729689191
31500 0.536935788736443
36000 0.567785135062818
40500 0.599738981714983
45000 0.631303084123025
54000 0.633983846349865
63000 0.633983846349865
72000 0.633983846349865
90000 0.633983846349865
135000 0.633983846349865
180000 0.633983846349865
225000 0.633983846349865
270000 0.633983846349865
315000 0.633983846349865
360000 0.633983846349865
405000 0.633983846349865
450000 0.633983846349865
};
\addlegendentry{\normalsize \arxiv \proposed}

\addplot [very thick, color0, dashed]
table {%
22500 0.479596601295886
450000 0.479596601295886
};

\addplot [very thick, color1, dashed]
table {%
54000 0.583593840946327
450000 0.583593840946327
};

\end{axis}

\end{tikzpicture}}
    \caption{ACT vs. total UCB trade-off of the compared schemes with homogeneous and stable workers.}
    \label{fig:homo_stable}
\end{figure}
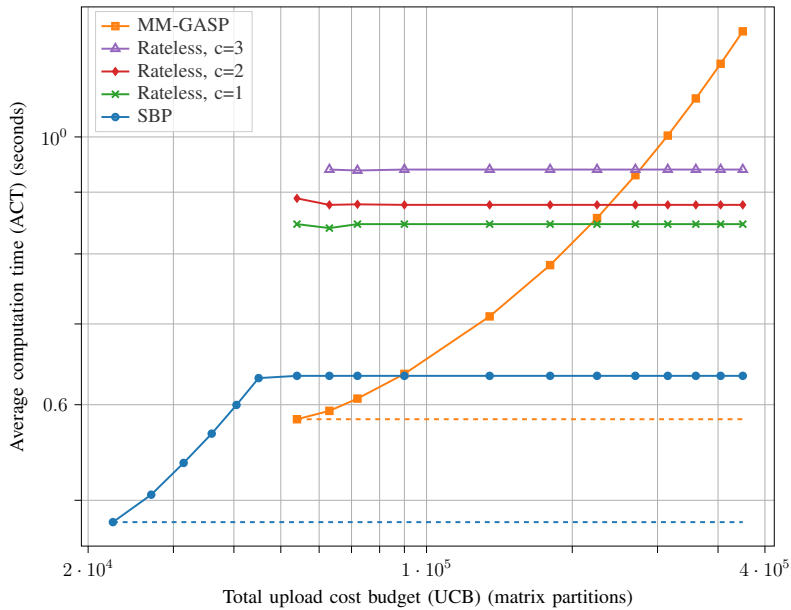

Similarly to the heterogeneous case discussed in \subsecref{heter}, due to the upload cost efficiency of the bivariate polynomial codes, we observe that the minimum UCB for which SBP can complete the overall task is smaller than for the other schemes. Moreover, in this homogeneous case, we observe that the ACTs of SBP and MM-GASP only increase with the total UCB. That is because, due to the similarity in workers' speeds, there is no need for the faster workers to compensate for the slower ones. Therefore, rather than improving the ACT, increasing $m$ beyond the minimum value, for which the schemes complete the overall task, results in a  higher ACT since it also increases $R_{th}$. Therefore, we depict the best ACT for SBP and MM-GASP coding schemes in \figref{homo_stable} and \figref{homo_mostly_stable} by flat dashed lines. 

On the other hand, for the rateless codes, we observe that, regardless of the number of clusters, $c$, considered, they perform significantly worse than SBP coding scheme for all UCB values. That is because, while the sub-tasks are one-to-any replaceable in SBP coding scheme, i.e, the result of any sub-task can compensate for the absence of any other sub-task, this is not the case in the rateless coding scheme. Since we consider the homogeneous workers in their speeds, there is not much difference between the clusters in the rateless coding scheme. Since, to decode the sub-tasks in a cluster, all of the workers in that cluster must finish their sub-tasks, the ACT increases. 

As we stated in \subsecref{heter}, in a real-world scenario, the speeds of workers can occasionally change. To model this effect, in \figref{homo_mostly_stable}, we provide the ACT versus UCB trade-off in the scenario in which the workers are mostly-stable with a transition probability $\rho=0.001$. Since there is only one heterogeneity class in the homogeneous case, to simulate mostly-stable workers, we assume that a worker sticks to $\lambda=10^{-2}\times KL$ with probability $\rho$, but with probability $1-\rho$, its $\lambda$ parameter is chosen uniformly between $\lambda=10^{-3}\times KL$ and $\lambda=10^{-4} \times KL$.

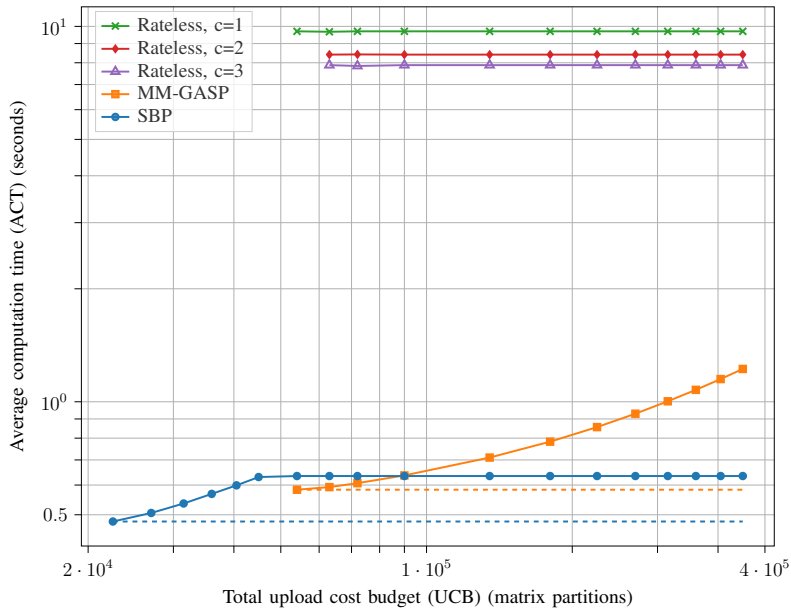
\begin{figure}[t]
    \centering
    \resizebox{0.65\linewidth}{!}{
    % This file was created by tikzplotlib v0.9.6.
\begin{tikzpicture}

\definecolor{color0}{rgb}{0.12156862745098,0.466666666666667,0.705882352941177}
\definecolor{color1}{rgb}{1,0.498039215686275,0.0549019607843137}
\definecolor{color2}{rgb}{0.172549019607843,0.627450980392157,0.172549019607843}
\definecolor{color3}{rgb}{0.83921568627451,0.152941176470588,0.156862745098039}
\definecolor{color4}{rgb}{0.580392156862745,0.403921568627451,0.741176470588235}
\definecolor{color5}{rgb}{0.549019607843137,0.337254901960784,0.294117647058824}

\begin{axis}[
legend cell align={left},
legend style={fill opacity=0.8, draw opacity=1, text opacity=1, draw=white!80!black,at={(0.02,0.88)},anchor=west},
legend columns=1,
log basis x={10},
log basis y={10},
tick align=outside,
tick pos=left,
x grid style={white!69.0196078431373!black},
xlabel={\normalsize Total upload cost budget (UCB) (matrix partitions)},
ylabel={\normalsize Average computation time (ACT) (seconds)},
xmin=19370.062334964, xmax=522713.857338695,
xmode=log,
xtick style={color=black},
xtick={20000,30000,40000,50000,60000,70000,80000,90000,100000,200000,300000,400000,500000},
xticklabels={\(\displaystyle {2\cdot 10^4}\),\(\displaystyle {}\),\(\displaystyle {}\),\(\displaystyle {}\),\(\displaystyle {}\),\(\displaystyle {}\),\(\displaystyle {}\),\(\displaystyle {}\),\(\displaystyle {\normalsize 1\cdot 10^5}\),\(\displaystyle {}\),\(\displaystyle {}\),\(\displaystyle {}\),\(\displaystyle {\normalsize 4\cdot 10^5}\),},
y grid style={white!69.0196078431373!black},
ymin=0.412786611662386, ymax=11.2754730436651,
ymode=log,
ytick style={color=black},
ytick={0.5,0.6,0.7,0.8,0.9,1,2,3,4,5,6,7,8,9,10},
yticklabels={\(\displaystyle {0.5}\),\(\displaystyle {}\),\(\displaystyle {}\),\(\displaystyle {}\),\(\displaystyle {}\),\(\displaystyle {10^{0}}\),\(\displaystyle {}\),\(\displaystyle {}\),\(\displaystyle {}\),\(\displaystyle {}\),\(\displaystyle {}\),\(\displaystyle {}\),\(\displaystyle {}\),\(\displaystyle {}\),\(\displaystyle {10^1}\)},
grid
]

\addplot [very thick, color2, mark=x, mark size=3, mark options={solid}]
table {%
54000 9.70160180156962
63000 9.67499338334633
72000 9.70160180156962
90000 9.70160180156962
135000 9.70160180156962
180000 9.70160180156962
225000 9.70160180156962
270000 9.70160180156962
315000 9.70160180156962
360000 9.70160180156962
405000 9.70160180156962
450000 9.70160180156962
};
\addlegendentry{\normalsize \arxiv Rateless, c=1}
\addplot [very thick, color3, mark=diamond*, mark size=2, mark options={solid}]
table {%
63000 8.40990960251967
72000 8.41929091683649
90000 8.40990960251967
135000 8.40990960251967
180000 8.40990960251967
225000 8.40990960251967
270000 8.40990960251967
315000 8.40990960251967
360000 8.40990960251967
405000 8.40990960251967
450000 8.40990960251967
};
\addlegendentry{\normalsize \arxiv Rateless, c=2}
\addplot [very thick, color4, mark=triangle, mark size=3, mark options={solid}]
table {%
63000 7.88515677910686
72000 7.84660625037254
90000 7.88515677910686
135000 7.88515677910686
180000 7.88515677910686
225000 7.88515677910686
270000 7.88515677910686
315000 7.88515677910686
360000 7.88515677910686
405000 7.88515677910686
450000 7.88515677910686
};
\addlegendentry{\normalsize \arxiv Rateless, c=3}

\addplot [very thick, color1, mark=square*, mark size=2, mark options={solid}]
table {%
54000 0.583063974842608
63000 0.592578464843454
72000 0.606989123483907
90000 0.636508999170937
135000 0.710225204014517
180000 0.783046996454747
225000 0.856066174854733
270000 0.929353834275177
315000 1.00301630289755
360000 1.07587511799463
405000 1.14948652610197
450000 1.22307798619296
};
\addlegendentry{\normalsize \arxiv MM-GASP}

\addplot [very thick, color0, mark=*, mark size=2, mark options={solid}]
table {%
22500 0.479752148952563
27000 0.505563605200671
31500 0.535872860087322
36000 0.568318098317474
40500 0.598917364334936
45000 0.630206337907349
54000 0.634151791293423
63000 0.634151791293423
72000 0.634151791293423
90000 0.634151791293423
135000 0.634151791293423
180000 0.634151791293423
225000 0.634151791293423
270000 0.634151791293423
315000 0.634151791293423
360000 0.634151791293423
405000 0.634151791293423
450000 0.634151791293423
};
\addlegendentry{\normalsize \arxiv \proposed}

\addplot [very thick, color0, dashed]
table {%
22500 0.479752148952563
450000 0.479752148952563
};

\addplot [very thick, color1, dashed]
table {%
54000 0.583063974842608
450000 0.583063974842608
};

\end{axis}

\end{tikzpicture}}
    \caption{ACT vs. total UCB trade-off with homogeneous and mostly-stable workers.}
    \label{fig:homo_mostly_stable}
\end{figure}

We observe that the effect of such a low probable deviation from the original parameters is considerable in the rateless codes since in order to utilize the computations in a cluster, all the workers in that cluster must complete their sub-tasks. If some of these workers straggle even only for one round, it can delay the overall computation significantly.

To conclude, we observe that, in the cases in which the workers' speeds are known to be close to each other, i.e., homogeneous, SBP coding scheme is preferable over both the rateless coding and the MM-GASP schemes.

\section{Proof of \thmref{r_th}} \label{sec:main_proof}

In \figref{poly-coeffs}, we visualize the degrees of the monomials
of $A(x)B(x,y)$ in the $\deg(x)-\deg(y)$ plane. From \figref{poly-coeffs},
we see that the number of monomials of $A(x)B(x,y)$ is $(K+T)L+m(K+T-1)$.
We need to show that every possible combination of so many responses
from the workers interpolates to a unique polynomial, implying $(K+T)L+m(K+T-1)$
is the recovery threshold.

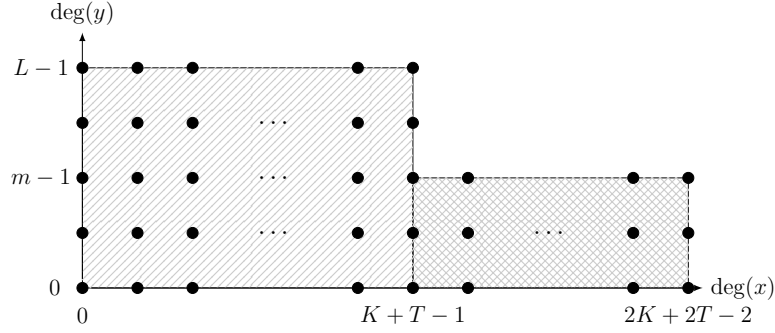
\begin{figure}[t]
\centering 
\resizebox{0.65\linewidth}{!}{
\usetikzlibrary{patterns} \begin{tikzpicture}[scale=1.8]
\draw[very thin]  (-5,2.5) rectangle (-2,0.5); 
\draw[very thin]  (-2,1.5) rectangle (0.5,0.5);
\path [pattern=north east lines, pattern color = black!20]  (-5,2.5) rectangle (-2,0.5); 
\path [pattern=crosshatch, pattern color = black!20]  (-2,1.5) rectangle (0.5,0.5);
\draw[fill, color=black] (-4.5,0.5) node (v1) {} circle (.05);
\draw[fill, color=black] (-4.5,2.5) node (v1) {} circle (.05);
\draw[fill, color=black] (-4.5,2) node (v1) {} circle (.05);
\draw[fill, color=black] (-4.5,1.5) node (v1) {} circle (.05);
\draw[fill, color=black] (-4.5,1) node (v1) {} circle (.05);
\draw[fill, color=black] (-5,0.5) node (v0) {} circle (.05); 
\draw[fill, color=black] (-5,2.5) node (v1) {} circle (.05);
\draw[fill, color=black] (-5,2) node (v1) {} circle (.05);
\draw[fill, color=black] (-5,1.5) node (v1) {} circle (.05);
\draw[fill, color=black] (-5,1) node (v1) {} circle (.05);
\draw[fill, color=black] (-4,0.5) node (v1) {} circle (.05);
\draw[fill, color=black] (-4,2.5) node (v1) {} circle (.05);
\draw[fill, color=black] (-4,2) node (v1) {} circle (.05);
\draw[fill, color=black] (-4,1.5) node (v1) {} circle (.05);
\draw[fill, color=black] (-4,1) node (v1) {} circle (.05);
\draw[fill, color=black] (-2,0.5) node (v1) {} circle (.05);
\draw[fill, color=black] (-2,2.5) node (v1) {} circle (.05);
\draw[fill, color=black] (-2,2) node (v1) {} circle (.05);
\draw[fill, color=black] (-2,1.5) node (v1) {} circle (.05);
\draw[fill, color=black] (-2,1) node (v1) {} circle (.05);
\draw[fill, color=black] (-2.5,0.5) node (v1) {} circle (.05);
\draw[fill, color=black] (-2.5,2.5) node (v1) {} circle (.05);
\draw[fill, color=black] (-2.5,2) node (v1) {} circle (.05);
\draw[fill, color=black] (-2.5,1.5) node (v1) {} circle (.05);
\draw[fill, color=black] (-2.5,1) node (v1) {} circle (.05);
\draw[fill, color=black] (-1.5,1.5) node (v1) {} circle (.05); 
\draw[fill, color=black] (-1.5,0.5) node (v1) {} circle (.05); 
\draw[fill, color=black] (-1.5,1) node (v1) {} circle (.05);
\draw[fill, color=black] (0,1.5) node (v1) {} circle (.05); 
\draw[fill, color=black] (0,0.5) node (v1) {} circle (.05); 
\draw[fill, color=black] (0,1) node (v1) {} circle (.05);
\draw[fill, color=black] (0.5,1.5) node (v1) {} circle (.05); 
\draw[fill, color=black] (0.5,0.5) node (v1) {} circle (.05); 
\draw[fill, color=black] (0.5,1) node (v1) {} circle (.05);
\node at (-0.75,1) {$\cdots$}; 
\node at (-3.25,1) {$\cdots$}; 
\node at (-3.25,1.5) {$\cdots$};
\node at (-3.25,2) {$\cdots$};
\node[scale=0.9] at (-5,0.25) {$0$}; 
\node[scale=0.9] at (-5.25,0.5) {$0$};
\node[scale=0.9] at (-2,0.25) {$K+T-1$};
\node[scale=0.9] at (-5.375,1.5) {$m-1$}; 
\node[scale=0.9] at (-5.375,2.5) {$L-1$}; 
\node[scale=0.9] at (0.5,0.25) {$2K+2T-2$};

\node[scale=0.9] (v2) at (1,0.5) {$\deg(x)$};
\draw [-latex] (v0) edge (v2);
\node[scale=0.9] (v3) at (-5,3) {$\deg(y)$};
\draw [-latex] (v0) edge (v3);

\end{tikzpicture}}
\caption{\label{fig:poly-coeffs}The visualization of the degrees of the monomials
of $A(x)B(x,y)$ in the $\deg(x)-\deg(y)$ plane.}
\end{figure}

\begin{defn}
Bivariate polynomial interpolation problem can be formulated as solving
a linear system of equations, whose unknowns are the coefficients
of $A(x)B(x,y)$ and whose coefficient matrix consists of the monomials
of $A(x)B(x,y)$ and their derivatives with respect to $y$ evaluated at the evaluation points of the workers, \revision{$(x_i,y_i),i\in[N]$}. We refer
to this coefficient matrix as the \textbf{interpolation matrix} and denote
it by $M$. Since the number of monomials of $A(x)B(x,y)$ is $R_{th}=(K+T)L+m(K+T-1)$,
we require \revision{$R_{th}$ equations to interpolate it, and hence,} $M\in\mathbb{R}^{R_{th}\times R_{th}}$. Each row of $M$
corresponds to the result of one sub-task sent by a worker to the
master. For example, when $K=L=2$, $m=2$ and $T=1$, we have  $R_{th}=10$, and one possible interpolation matrix formed by any 5 workers, each of which provides $m=2$ computations, is as follows:
$$\displaystyle
M=\begin{bmatrix}1 & x_{1} & x_{1}^{2} & x_{1}^{3} & x_{1}^{4} & y_{1} & x_{1}y_{1} & x_{1}^{2}y_{1} & x_{1}^{3}y_{1} & x_{1}^{4}y_{1}\\
0 & 0 & 0 & 0 & 0 & 1 & x_{1} & x_{1}^{2} & x_{1}^{3} & x_{1}^{4}\\
\vdots & \vdots & \vdots & \vdots & \vdots & \vdots & \vdots & \vdots & \vdots & \vdots\\
1 & x_{5} & x_{5}^{2} & x_{5}^{3} & x_{5}^{4} & y_{5} & x_{5}y_{5} & x_{5}^{2}y_{5} & x_{5}^{3}y_{5} & x_{5}^{4}y_{5}\\
0 & 0 & 0 & 0 & 0 & 1 & x_{5} & x_{5}^{2} & x_{5}^{3} & x_{5}^{4}
\end{bmatrix}.$$

Observe that the first row represents the direct evaluation $A(x_{1})B(x_{1},y_{1})$
from worker 1, and the second row represents $A(x_{1})\partial_{1}B(x_{1},y_{1})$,
again from worker 1. In general, any interpolation matrix formed by $R_{th}=10$ computations received from any subset of workers is also valid, as long as the workers follow the computation order specified in Section \ref{subsec:Computation}.
\end{defn}
%Valid interpolation matrices thus can have many different forms, the following definition will be useful to describe them.

The problem of showing that any $R_{th}$ responses from the workers
interpolates to a unique polynomial is equivalent to showing that
the corresponding interpolation matrix is non-singular. The theorem
claims that this is the case with high probability. First, we need
to show that there exist some evaluation points for which the determinant
of the interpolation matrix is not zero. That is equivalent to showing
that $\det(M)$ is not the zero polynomial of the evaluation points. In \cite{hasircioglu2020bivariate}, such a result for the same type
of interpolation matrices is shown for the real field $\mathbb{R}$.
Here, we extend this proof to $\mathbb{F}$. We show that $\det(M)$
is non-zero for some evaluation points by using Taylor series expansion
of $\det(M)$, as done in \cite{hasircioglu2020bivariate}. This can
be done since Taylor series expansion is also applicable in $\mathbb{F}$, as long as, the degree of the polynomial $A(x)B(x,y)$ is smaller than the
field size $q$. This can be guaranteed by choosing a large $q$.
For further details on the applicability of Taylor series expansion in
finite fields, see \cite{hoffmanlinear} and \cite{felix_fontein_2009}. 

Without losing generality, let us assume first that $n$ workers with $n\leq N$, provide, together, enough responses, i.e., $R_{th}$, to interpolate $A(x)B(x,y)$. Let us assume $(x_{i},y_{i})$ and $(x_{j},y_{j})$ are two evaluation
points for which the evaluations of $A(x)B(x,y)$ and some of its
derivatives at these points are received by the master. We write the
Taylor series expansion of $\det(M)$ around $(x_{i},y_{i})$ by taking
the evaluation point $(x_{j},y_{j})$ as the variable: 
\begin{equation}
\det(M)=\sum_{(\alpha_{1},\alpha_{2})\in\mathbb{N}^{2}}\frac{1}{\alpha_{1}!\alpha_{2}!}(x_{j}-x_{i})^{\alpha_{1}}(y_{j}-y_{i})^{\alpha_{2}}D_{\alpha_{1},\alpha_{2}}(\tilde{Z}),\label{eq:taylor_expansion}
\end{equation}
where $\tilde{Z}\triangleq \{(x_k,y_k):k\in[n]\}\setminus \{(x_j,y_j)\}$ and $
D_{\alpha_{1},\alpha_{2}}(\tilde{Z})\triangleq\frac{\partial^{\alpha_{1}+\alpha_{2}}}{\partial x_{j}^{\alpha_{1}}\partial y_{j}^{\alpha_{2}}}\det(M)(x_{j},y_{j})\biggl|_{x_{j}=x_{i},y_{j}=y_{i}}.$
We call $(x_{i},y_{i})$ the \textbf{pivot node} and $(x_{j},y_{j})$
the \textbf{variable node}. 
\begin{rem}
The monomials $(x_{j}-x_{i})^{\alpha_{1}}(y_{j}-y_{i})^{\alpha_{2}}$
are linearly independent for different $(\alpha_{1},\alpha_{2})$
pairs if there is no relation between $x$ and $y$ coordinates of
the evaluation points, i.e., $x_{i}$ and $x_{j}$ do not depend on
$y_{i}$ and $y_{j}$. Thus, $\det(M)$ is a zero polynomial of all
evaluation points, if and only if $D_{\alpha_{1},\alpha_{2}}(\tilde{Z})=0,\forall(\alpha_{1},\alpha_{2})\in\text{\ensuremath{\mathbb{N}^{2}}}$. Therefore,
in order to show that $M$ is non-singular, it suffices to show that there exists 
at least one $(\alpha_{1},\alpha_{2})$ making $D_{\alpha_{1},\alpha_{2}}(\tilde{Z})$ nonzero. 
\end{rem}
Before looking into $D_{\alpha_{1},\alpha_{2}}(\tilde{Z})$
in more detail, let us define some notions which will help us understanding
its structure.

\begin{defn}
\label{def:InterpolationSpace} \textbf{Derivative Set.} In an interpolation matrix $M$, there might be several rows each corresponding to a different derivative order  of $A(x)B(x,y)$ associated with the evaluation point $z_{i}\triangleq(x_{i},y_{i})$,
which is assigned to worker $i$. We define the derivative set of
$z_{i}$, denoted by $U_{z_{i},M}$ as the set of derivative orders of
$A(x)B(x,y)$ with respect to $x$ and $y$ associated to $z_i$ in $M$. That is, $(d_x,d_y)\in U_{z_{i},M}$ if and only if $M$ has a row corresponding to $\frac{\partial^{d_x+d_y}}{\partial x_i^{d_x} \partial y_i^{d_y}}A(x_i)B(x_i,y_i)$.
\end{defn}

\begin{defn}\textbf{Derivative order space.} The derivative
order space of a bivariate polynomial $A(x)B(x,y)$ is defined as
the 2-dimensional space of all its possible derivative orders. Since
the largest derivative order of a bivariate polynomial is its largest
monomial degree, the derivative order space has the same shape as
$\deg(x)-\deg(y)$ plane depicted in \figref{poly-coeffs}. For example,
consider the 2-D derivative order $(K+T,m)$. Since all the monomials
of $A(x)B(x,y)$ having a degree larger than $m$ with respect to
$y$ have a degree less than or equal to $K+T-1$ with respect to $x$,
the 2-D derivative order $(K+T,m)$ results in a zero polynomial. Thus,
this is not an element of derivative order space. The derivative set
of each evaluation point can be depicted in the derivative order space
separately.
\end{defn}
\begin{defn}
\textbf{\label{def:operator}}Let $M$ be an interpolation matrix
for which some of its rows depend on $x_{j}$ and $y_{j}$. Let
us denote by $r_{i}$ its $i^{th}$ row and define \revision{a \emph{simple shift} as
\begin{equation}
\partial_{i,x_j}M\triangleq\left[r_{1}^{T},\dots,\frac{\partial}{\partial x_{j}}r_{i}^{T},\dots,r_{KL}^{T}\right]^{T}
\end{equation}
and 
\begin{equation}
\partial_{i,y_j}M\triangleq\left[r_{1}^{T},\dots,\frac{\partial}{\partial y_{j}}r_{i}^{T},\dots,r_{KL}^{T}\right]^{T}.
\end{equation}
%Note that, in this paper, we use the term shift to refer to the derivatives of the interpolation matrix since taking derivatives result in shifts of the elements in the derivative set of the variable node when depicted in the interpolation space. 
That is, $\partial_{i,x_j}$ and $\partial_{i,y_j}$ transform $M$ into another matrices by
taking the derivative of its $i^{th}$ row with respect to $x_{j}$ and $y_j$, respectively. If in the resulting derivative set, i.e., $U_{x_j,\partial_{i,x_j}M}$ or $U_{y_j,\partial_{i,y_j}M}$, each element has a multiplicity of one, then the shift is called a \emph{regular simple shift}.
\end{defn}}
\revision{
\begin{defn}
Let $\boldsymbol{k}$ and $\boldsymbol{l}$ be vectors such that $\boldsymbol{k}\in\{0,1,\cdots,K-1\}^{R_{th}}$ and $\boldsymbol{l}\in\{0,1,\cdots,L-1\}^{R_{th}}$. We define \emph{the composition of simple shifts} as
\begin{equation}
\label{eq:composition_of_simple_shifts}
\nabla^{x_j,y_j}_{\boldsymbol{k},\boldsymbol{l}}M=\partial_{1,x_j}^{\boldsymbol{k}(1)}\partial_{2,x_j}^{\boldsymbol{k}(2)}\cdots\partial_{R_{th},x_j}^{\boldsymbol{k}(R_{th})} \partial_{1,y_j}^{\boldsymbol{l}(1)} \partial_{2,y_j}^{\boldsymbol{l}(2)} \cdots \partial_{R_{th},y_j}^{\boldsymbol{l}(R_{th})} M.
\end{equation}
\end{defn} 

That is, the $i^{th}$ element of $\boldsymbol{k}$ denotes the order of the derivative of $i^{th}$ row of $M$ with respect to the variable $x_j$, and the $i^{th}$ element of $\boldsymbol{l}$ denotes the order of the derivative of $i^{th}$ row of $M$ with respect to the variable $y_j$. In fact, \eqref{composition_of_simple_shifts} is not the only way to compute $\nabla^{x_j,y_j}_{\boldsymbol{k},\boldsymbol{l}}M$ since the derivative operation is commutative. One can compute  $\nabla^{x_j,y_j}_{\boldsymbol{k},\boldsymbol{l}}M$ in any other order. Each of these possible orders are referred to as a derivative path. If a derivative path involves only regular simple shifts, i.e. after each derivative there are not two equal rows, then it is called a \emph{regular derivative path}. We denote the number of regular derivative paths by $C_{\boldsymbol{k},\boldsymbol{l}}(M)$.
}

%
\begin{comment}
\begin{defn}
We call the derivatives of $M$ as\textbf{ shifts} since each derivative
shifts the elements of the corresponding derivative set in the derivative
order space. Throughout the paper this term may refer to, both, the derivative
of a single row, i.e., $\partial_{i,k,l}$, or to a composition
of derivatives of several rows. We say that the shifts of the form
$\partial_{i,1,0}$ and $\partial_{i,0,1}$ are\textbf{ simple} shifts.
Composition of the shifts of different rows whose orders sum up to
$(\alpha_{1},\alpha_{2})$ are written in the vector form as $(\boldsymbol{k,l})$ and satisfy $\sum_{i=1}^{R_{th}}\boldsymbol{k}(i)=\alpha_{1}$ and $\sum_{i=1}^{R_{th}}\boldsymbol{l}(i)=\alpha_{2}$.
We denote such a compositions as $\left(\prod_{i=1}^{R_{th}}\partial_{i,\boldsymbol{k}(i),\boldsymbol{l}(i)}\right)M$.
The shift $(\boldsymbol{k,l})$ is called \textbf{regular }if it can
be represented by a sequence of simple shifts such that after each
simple shift there are not two equal rows or a zero row in the resultant
matrix making it singular. Finally, for a derivative order $(\alpha_{1},\alpha_{2})$,
if there is only one regular shift $(\boldsymbol{k,l})$, then it
is called \textbf{unique }shift and $(\alpha_{1},\alpha_{2})$ is
called unique shift order. 
\end{defn}
\end{comment}

Based on these definitions, we have the following lemma. 

\revision{
\begin{lem}[Lemma 1 in \cite{hasircioglu2020bivariate}]
\label{lem:regular_permutations}
Let $\boldsymbol{k}\in\{0,1,\cdots,K-1\}^{R_{th}}$, $\boldsymbol{l}\in\{0,1,\cdots,L-1\}^{R_{th}}$ and $\alpha_{1}=\sum_{i=1}^{R_{th}}\boldsymbol{k}(i)$
and $\alpha_{2}=\sum_{i=1}^{R_{th}}\boldsymbol{l}(i)$. Then, we have
\begin{equation}
\label{eq:derivative_lemma}
\frac{\partial^{\alpha_{1}+\alpha_{2}}}{\partial x_j^{\alpha_{1}}\partial y_j^{\alpha_{2}}} \det(M)\biggl|_{x_{j}=x_{i},y_{j}=y_{i}}=\sum_{
\left(\boldsymbol{k,l}\right)\in \mathcal{R}_M(\alpha_1,\alpha_2)}C_{\boldsymbol{k},\boldsymbol{l}}(M)\det\left(\nabla^{x_j,y_j}_{\boldsymbol{k},\boldsymbol{l}}M\right)\biggl|_{x_{j}=x_{i},y_{j}=y_{i}},
\end{equation}
where $\mathcal{R}_M(\alpha_1,\alpha_2)$ is the set of  $(\boldsymbol{k},\boldsymbol{l})$ pairs satisfying  $C_{\boldsymbol{k},\boldsymbol{l}}(M)\neq 0$, i.e., there is at least one derivative path for which $\nabla^{x_j,y_j}_{\boldsymbol{k},\boldsymbol{l}}$ can be applied by using only regular simple shifts. 
\end{lem}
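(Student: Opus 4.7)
The proof plan is to induct on the total order $n=\alpha_1+\alpha_2$, exploiting the multilinearity of the determinant in its rows. The base case $n=0$ is immediate with $\boldsymbol{k}=\boldsymbol{l}=\boldsymbol{0}$ and $C_{\boldsymbol{0},\boldsymbol{0}}(M)=1$. For the inductive step, the main tool is the row-differentiation identity
\[
\frac{\partial}{\partial x_j}\det(N)=\sum_{i=1}^{R_{th}}\det(\partial_{i,x_j}N),
\]
and its $y_j$ analog, both of which follow from the Leibniz expansion of $\det$ combined with the product rule, since each row of $N$ depends on $(x_j,y_j)$ independently. Applying this identity $\alpha_1$ times in $x_j$ and $\alpha_2$ times in $y_j$ unrolls the mixed partial in \eqref{derivative_lemma} into a sum over all length-$n$ ordered sequences of simple shifts, i.e., derivative paths. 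Regrouping these sequences by their resulting row-derivative profile $(\boldsymbol{k},\boldsymbol{l})$ yields
\[
\frac{\partial^{\alpha_1+\alpha_2}}{\partial x_j^{\alpha_1}\partial y_j^{\alpha_2}}\det(M)\biggl|_{x_j=x_i,y_j=y_i}=\sum_{(\boldsymbol{k},\boldsymbol{l})}P_{\boldsymbol{k},\boldsymbol{l}}(M)\,\det(\nabla^{x_j,y_j}_{\boldsymbol{k},\boldsymbol{l}}M)\biggl|_{x_j=x_i,y_j=y_i},
\]
where $P_{\boldsymbol{k},\boldsymbol{l}}(M)$ is the total number of derivative paths (regular or not) producing that profile.

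The remaining task is to show that the non-regular contribution to this expansion vanishes, so $P_{\boldsymbol{k},\boldsymbol{l}}(M)$ effectively collapses to $C_{\boldsymbol{k},\boldsymbol{l}}(M)$ and the sum ranges only over $\mathcal{R}_M(\alpha_1,\alpha_2)$. I would achieve this via a sign-reversing involution $\phi$ on the set of non-regular paths. For such a path $p$, let $k^{\star}$ be the first step after which the current matrix has two equal rows, say rows $i_1<i_2$ (choose the lexicographically smallest such pair if several coincide). Define $\phi(p)$ by exchanging every subsequent shift targeting row $i_1$ with one targeting row $i_2$, and vice versa, from step $k^{\star}+1$ onward. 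Then the matrix of $\phi(p)$ differs from that of $p$ by the transposition $i_1\leftrightarrow i_2$ of rows, so the two determinants have opposite signs. Any fixed point of $\phi$ must have rows $i_1,i_2$ receiving identical shift patterns in the suffix; those two rows therefore remain equal until the end, forcing the associated determinant to be zero. Pairwise cancellation thus removes every non-regular contribution, leaving precisely $\sum_{(\boldsymbol{k},\boldsymbol{l})\in\mathcal{R}_M(\alpha_1,\alpha_2)}C_{\boldsymbol{k},\boldsymbol{l}}(M)\det(\nabla^{x_j,y_j}_{\boldsymbol{k},\boldsymbol{l}}M)$ as required.

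The main obstacle I expect is verifying that the involution is well-defined and truly fixed-point free off the zero-determinant set, in particular that the canonical choice of the pair $(i_1,i_2)$ at step $k^{\star}$ is preserved under $\phi$, so that $\phi(\phi(p))=p$, and that the sign bookkeeping remains consistent even when the profile of $\phi(p)$ differs from that of $p$ (the cancellation must then be read across profiles, not within a single one). Once this combinatorial step is set up carefully, the identity \eqref{derivative_lemma} follows by combining the multilinear expansion with the pairwise cancellation of non-regular paths. Throughout, one tacitly uses that Taylor expansion and the product rule remain valid over $\mathbb{F}$ provided $q$ exceeds the total degree of $A(x)B(x,y)$, as already noted in the main text.
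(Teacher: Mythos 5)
The paper never proves this lemma: it imports it verbatim as ``Lemma~1 in \cite{hasircioglu2020bivariate}'' and relies on that reference for the argument, so there is no in-paper proof to compare against. Judged on its own, your outline is sound and captures the natural route: the row-linearity of $\det$ gives $\frac{\partial}{\partial x_j}\det(N)=\sum_{r}\det(\partial_{r,x_j}N)$, iterating produces a sum over ordered derivative paths, regrouping by profile $(\boldsymbol{k},\boldsymbol{l})$ yields multinomial coefficients, and a sign-reversing involution on non-regular paths makes those contributions cancel, leaving $\sum C_{\boldsymbol{k},\boldsymbol{l}}\det(\nabla^{x_j,y_j}_{\boldsymbol{k},\boldsymbol{l}}M)$. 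Note that it suffices to establish the identity as a polynomial identity in $(x_j,y_j)$ and then specialize to $x_j=x_i,y_j=y_i$; this is also the interpretation that matches the paper's definition of a regular simple shift (multiplicity one in the derivative set $U_{z_j,\cdot}$), since before specialization only rows associated with the variable node can collide.

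Two details are worth tightening. First, you should make explicit that the pair $(i_1,i_2)$ produced at step $k^{\star}$ necessarily consists of two rows depending on $(x_j,y_j)$: symbolically, rows from other evaluation points can never become equal to each other or to a worker-$j$ row, so the swap always exchanges shifts between two rows that genuinely carry nontrivial $x_j,y_j$-derivatives and the row-transposition/sign argument goes through. Second, the fixed-point characterization is stated loosely: a fixed point of $\phi$ is a path in which \emph{no} suffix step targets $i_1$ or $i_2$ (a single step can only target one row, so ``identical patterns'' can only mean both empty), and this is what forces rows $i_1,i_2$ to remain equal to the end and kills the determinant. With these clarifications and the observation that $\phi$ preserves the first bad step and the chosen pair (hence is an involution across, not within, profiles, exactly as you flag), the argument closes. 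The finite-field caveat you raise is correct but mild here: the Leibniz/row-linearity identity and the regrouping use no division, so the restriction that $q$ (or rather the characteristic) exceed the relevant degrees is only needed for the Taylor step elsewhere in the paper, not for this lemma per se.
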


\begin{defn}
If $\mathcal{R}_M(\alpha_1,\alpha_2)$ has only one element, i.e., there is only one $(\boldsymbol{k},\boldsymbol{l})$ resulting in a regular simple shift, then $(\alpha_1,\alpha_2)$ is called a \emph{unique shift order} and $(\boldsymbol{k},\boldsymbol{l})$ is called a \emph{unique shift}.
\end{defn}
}

\begin{comment}
\begin{lem}
If $(\alpha_{1},\alpha_{2})$ is a unique shift order, and $(\boldsymbol{k,l})$
is the corresponding unique shift, then

\begin{equation}
\frac{\partial^{\alpha_{1}+\alpha_{2}}}{\partial x_{j}^{\alpha_{1}}\partial y_{j}^{\alpha_{2}}}\det(M)(x_{j},y_{j})=C_{\boldsymbol{k,l}}\cdot\det\left(\left(\prod_{i=1}^{R_{th}}\partial_{i,\boldsymbol{k}(i),\boldsymbol{l}(i)}\right)M(x_{j},y_{j})\right)
\end{equation}
\end{lem}
\begin{proof}
The lemma is a restatement of Lemma 2 and Definition 7 of \cite{hasircioglu2020bivariate}
together. Thus, the proof directly follows from the one in \cite{hasircioglu2020bivariate}.
\end{proof}

\end{comment}
Now, let us go back to $D_{\alpha_{1},\alpha_{2}}(\tilde{Z})$. Recall
that we would like to show that at least for one $(\alpha_{1},\alpha_{2})$, $D_{\alpha_{1},\alpha_{2}}(\tilde{Z})$ is non-zero. Observe that
it does not depend on $(x_{j},y_{j})$ since after taking the derivatives
with respect to $(x_{j},y_{j})$, the resulting expression is evaluated at $x_{j}=x_{i},y_{j}=y_{i}$.
If $(\alpha_{1},\alpha_{2})$ is a unique shift order, then according
to \eqref{derivative_lemma}, it is enough to show that $\det\left(M_{1}\right)$ is not
the zero polynomial, where $M_{1}\triangleq \nabla^{x_j,y_j}_{\boldsymbol{k},\boldsymbol{l}} M\mid_{x_{j}=x_{i},y_{j}=y_{i}}$.  Notice that,
$M_{1}$ no longer depends on the evaluation point $(x_{j},y_{j})$. \revision{We call such a procedure of transforming an interpolation matrix into another interpolation matrix via unique shifts as the \emph{coalescence} of the variable node and the pivot node. After obtaining $M_1$, we can employ the same idea to show $M_1$ is non-singular. Namely, we can} write the Taylor series expansion of $\det(M_{1})$ by choosing
a new variable node and keeping the same pivot node $(x_{i},y_{i})$. \revision{If there is a unique shift for the coalescence, the resultant matrix $M_2$ will not depend on neither the previous variable node $(x_j,y_j)$ nor the current variable node.} We can apply such coalescences successively as long as we can find a unique
shift order $(\alpha_{1},\alpha_{2})$ at each coalescence,
%, we can end up with another
%matrix $M_{2}$, which depends on all the evaluation points $M$ depends
%on except $(x_{j},y_{j})$ and %$(x_{k},y_{k})$. We proceed like this
%by eliminating an evaluation point every time 
until $M_{\text{final}}$ depends only on
one evaluation point, which is the pivot node, $(x_i,y_i)$.
%by assuming that every-time we can find
%a unique shift.
In $M_{\text{final}}$, the derivative
set of $(x_{i},y_{i})$ has all possible elements of the derivative
order space. Thus, $M_{\text{final}}$ is a triangular matrix, and hence, non-singular. 

\revision{To summarize, to prove that all possible interpolation matrices, $M$, generated from our scheme are non-singular in general,
we need to show that we can always find at least one unique shift for all the steps of the coalescence
procedure.} Our strategy to show that we can always find a unique shift in all coalescence steps is based
on the idea of keeping the derivative set of the pivot node to be a lower
set. \revision{A lower set is defined as a set in which the presence of an element implies the presence of all possible elements smaller than this element. To decide if an element is smaller than any other element, we need to define an ordering rule. For our case,} we define such an ordering as follows. Assume we
denote our pivot node as $z_{i}=(x_{i},y_{i})$ and take two derivative
orders $(a,b)\in U_{z_{i},M}$ and $(c,d)\in U_{z_{i},M}$, where $a$
and $c$ are the orders of the derivative with respect to $x_{i}$
and $b$ and $d$ are the orders of the derivative with respect to
$y_{i}$. We say $(a,b)<(c,d)$ if and only if $a<c$ or $a=c$ and
$b<d$.

Before formally stating our strategy to find a unique shift
in all the coalescence steps, we describe it with two examples.

\begin{example}
\label{exa:coal_ex1}Assume $K=L=5$, $T=1$ and $m=3$ \revision{and we are at the beginning of $p$-th coalescence step.} Let us choose $z_{i}$
as the pivot node and $z_{j}$ as the variable node. Further, assume \revision{at the beginning we have $U_{z_{i},M_{p-1}}=\{(a,b):(a,b)\leq(1,2)\}$ }
and $U_{z_{j},M_{p-1}}=\{(a,b):(a,b)\leq(0,2)\}$. We depict the derivative
sets of $U_{z_{i},M_{p-1}}$ and $U_{z_{j},M_{p-1}}$ in \figref{coal_ex1_piv} and
\figref{coal_ex1_var}. We will take smallest possible shift $(\alpha_{1},\alpha_{2})$
such that the resultant $U_{z_{i},M_p}$ after the coalescence. % with the defined ordering above. 
Knowing the number of elements
in $U_{z_{i},M_p}$ after the coalescence, its shape is uniquely determined under the condition that it must be a lower set and shown in \figref{coal_ex1}. In \figref{coal_ex1_var} and \figref{coal_ex1}, we assign to each element of \revision{ $U_{z_{j},M_{p-1}}$ either the letter "a", "b" or "c" so that we can track its location during and after the coalescence procedure.}
\revision{Recall that taking derivatives corresponds to shifting the elements of the derivative set in the derivative
order space. Thus, in order to shift the elements of $U_{z_{j},M_{p-1}}$ to their locations in the final shape in \figref{coal_ex1},
we need to the total number of shifts in both $x$ and $y$ directions is $4$, implying we need to choose $(\alpha_{1},\alpha_{2})=(4,4)$. For this choice, we have $\boldsymbol{k}(i_a)=2$,  $\boldsymbol{k}(i_b)=1$, $\boldsymbol{k}(i_c)=1$, $\boldsymbol{l}(i_a)=0$, $\boldsymbol{l}(i_b)=2$ and $\boldsymbol{l}(i_c)=2$ where $i_a$ is the row-index of the element $a$.}
Given this choice of $(\alpha_{1},\alpha_{2})$, there is no other
possible resulting shape for $U_{z_{i},M_p}$ resulting a non-singular $M_p$. To see this, observe that, if we write the derivative sets of $U_{z_{i},M_p}$ after-the-coalescence for all possible $(\boldsymbol{k,l})$
such that $\sum_{i}\boldsymbol{k}(i)=\alpha_{1}$ and $\sum_{i}\boldsymbol{l}(i)=\alpha_{2}$, then all, except the one depicted in \figref{coal_ex1} will have overlapping elements making the corresponding interpolation matrix singular.
%That is because $(\alpha_{1},\alpha_{2})=(4,4)$ is the minimum derivative order to obtain the desired shape of $U_{z_{i}}$ and there is no
%shifts possible towards the left or down, which would correspond negative
%derivatives.
Therefore, $(\alpha_{1},\alpha_{2})=(4,4)$ is a unique
shift order.

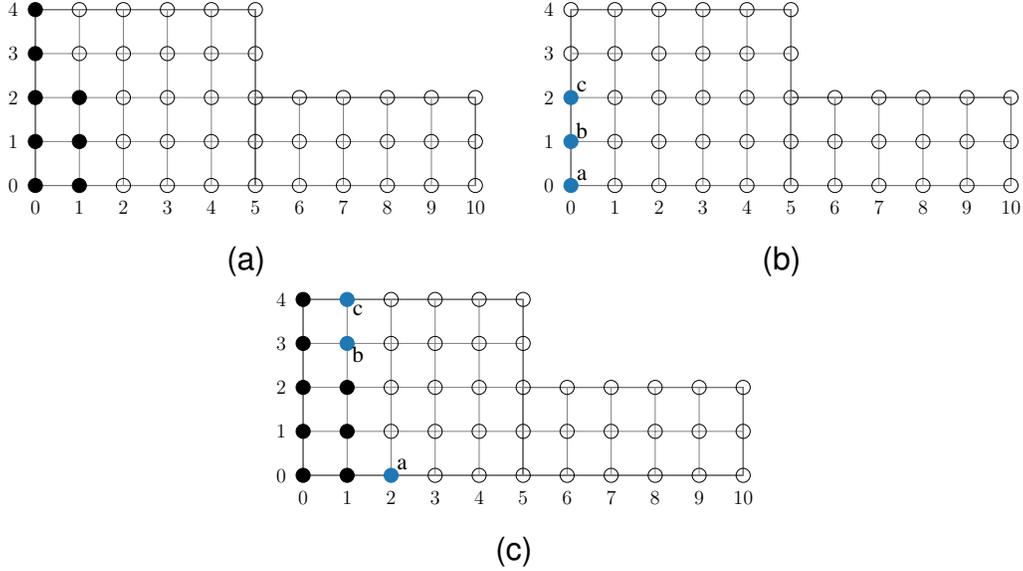
\begin{figure*}
\centering
\subfloat[\label{fig:coal_ex1_piv}]{
\scalebox{0.9}{\begin{tikzpicture}[scale=1.3]
\draw[very thin]  (-5,2.5) rectangle (-2.5,0.5); 
\draw[very thin]  (-2.5,1.5) rectangle (0,0.5);
\draw [help lines,  step=0.5cm] (-5,0.5) node (v19) {} grid (-2.5,2.5); 
\draw [help lines,  step=0.5cm] (-2.5,0.5) node (v19) {} grid (0,1.5); 
\draw[fill, color=black] (-4.5,0.5) node (v1) {} circle (.08);
\draw[color=black] (-4.5,2.5) node (v1) {} circle (.08);
\draw[color=black] (-4.5,2) node (v1) {} circle (.08);
\draw[fill, color=black] (-4.5,1.5) node (v1) {} circle (.08);
\draw[fill, color=black] (-4.5,1) node (v1) {} circle (.08);
\draw[fill, color=black] (-5,0.5) node (v0) {} circle (.08); 
\draw[fill, color=black] (-5,2.5) node (v1) {} circle (.08);
\draw[fill, color=black] (-5,2) node (v1) {} circle (.08);
\draw[fill, color=black] (-5,1.5) node (v1) {} circle (.08);
\draw[fill, color=black] (-5,1) node (v1) {} circle (.08);
\draw[color=black] (-4,0.5) node (v1) {} circle (.08);
\draw[color=black] (-4,2.5) node (v1) {} circle (.08);
\draw[color=black] (-4,2) node (v1) {} circle (.08);
\draw[color=black] (-4,1.5) node (v1) {} circle (.08);
\draw[color=black] (-4,1) node (v1) {} circle (.08);
\draw[color=black] (-2.5,0.5) node (v1) {} circle (.08);
\draw[color=black] (-1,1.5) node (v1) {} circle (.08);
\draw[color=black] (-2,0.5) node (v1) {} circle (.08);
\draw[color=black] (-2,1.5) node (v1) {} circle (.08);
\draw[color=black] (-2,1) node (v1) {} circle (.08);
\draw[color=black] (-1,1) node (v1) {} circle (.08);
\draw[color=black] (-2.5,2.5) node (v1) {} circle (.08);
\draw[color=black] (-2.5,2) node (v1) {} circle (.08);
\draw[color=black] (-2.5,1.5) node (v1) {} circle (.08);
\draw[color=black] (-2.5,1) node (v1) {} circle (.08);
\draw[color=black] (-1.5,1.5) node (v1) {} circle (.08); 
\draw[color=black] (-1.5,0.5) node (v1) {} circle (.08); 
\draw[color=black] (-1.5,1) node (v1) {} circle (.08);
\draw[color=black] (0,1.5) node (v1) {} circle (.08); 
\draw[color=black] (0,0.5) node (v1) {} circle (.08); 
\draw[color=black] (0,1) node (v1) {} circle (.08);
\draw[color=black] (-0.5,1.5) node (v1) {} circle (.08); 
\draw[color=black] (-0.5,0.5) node (v1) {} circle (.08); 
\draw[color=black] (-0.5,1) node (v1) {} circle (.08);
\draw[color=black] (-3.5,2.5) node (v1) {} circle (.08);
\draw[color=black] (-3,1.5) node (v1) {} circle (.08);
\draw[color=black] (-3,2) node (v1) {} circle (.08);
\draw[color=black] (-3,2.5) node (v1) {} circle (.08);
\draw[color=black] (-3.5,0.5) node (v1) {} circle (.08);
\draw[color=black] (-3.5,1) node (v1) {} circle (.08);
\draw[color=black] (-3.5,1.5) node (v1) {} circle (.08);
\draw[color=black] (-3.5,2) node (v1) {} circle (.08);
\draw[color=black] (-3,0.5) node (v1) {} circle (.08);
\draw[color=black] (-3,1) node (v1) {} circle (.08);
\draw[color=black] (-1,0.5) node (v1) {} circle (.08);

\node[scale=0.7] at (-5,0.25) {$0$}; 
\node[scale=0.7] at (-4.5,0.25) {$1$}; 
\node[scale=0.7] at (-4,0.25) {$2$}; 
\node[scale=0.7] at (-3.5,0.25) {$3$}; 
\node[scale=0.7] at (-3,0.25) {$4$}; 
\node[scale=0.7] at (-5.25,0.5) {$0$};
\node[scale=0.7] at (-5.25,1) {$1$};
\node[scale=0.7] at (-5.25,2) {$3$};
\node[scale=0.7] at (-2.5,0.25) {$5$};
\node[scale=0.7] at (-5.25,1.5) {$2$}; 
\node[scale=0.7] at (-5.25,2.5) {$4$}; 
\node[scale=0.7] at (0,0.25) {$10$};
\node[scale=0.7] at (-0.5,0.25) {$9$};
\node[scale=0.7] at (-1,0.25) {$8$};
\node[scale=0.7] at (-1.5,0.25) {$7$};
\node[scale=0.7] at (-2,0.25) {$6$};

\end{tikzpicture}}}\hspace{5pt}
\subfloat[\label{fig:coal_ex1_var}]{
\scalebox{0.9}{\begin{tikzpicture}[scale=1.3]
\definecolor{color0}{rgb}{0.12156862745098,0.466666666666667,0.705882352941177}
\definecolor{color1}{rgb}{1,0.498039215686275,0.0549019607843137}
\definecolor{color2}{rgb}{0.172549019607843,0.627450980392157,0.172549019607843}
\definecolor{color3}{rgb}{0.83921568627451,0.152941176470588,0.156862745098039}

\draw[very thin]  (-5,2.5) rectangle (-2.5,0.5); 
\draw[very thin]  (-2.5,1.5) rectangle (0,0.5);
\draw [help lines,  step=0.5cm] (-5,0.5) node (v19) {} grid (-2.5,2.5); 
\draw [help lines,  step=0.5cm] (-2.5,0.5) node (v19) {} grid (0,1.5); 
\draw[color=black] (-4.5,0.5) node (v1) {} circle (.08);
\draw[color=black] (-4.5,2.5) node (v1) {} circle (.08);
\draw[color=black] (-4.5,2) node (v1) {} circle (.08);
\draw[color=black] (-4.5,1.5) node (v1) {} circle (.08);
\draw[color=black] (-4.5,1) node (v1) {} circle (.08);
\draw[fill, color=color0] (-5,0.5) node (v0) {} circle (.08); 
\draw[color=black] (-5,2.5) node (v1) {} circle (.08);
\draw[color=black] (-5,2) node (v1) {} circle (.08);
\draw[fill, color=color0] (-5,1.5) node (v1) {} circle (.08);
\draw[fill, color=color0] (-5,1) node (v1) {} circle (.08);
\draw[color=black] (-4,0.5) node (v1) {} circle (.08);
\draw[color=black] (-4,2.5) node (v1) {} circle (.08);
\draw[color=black] (-4,2) node (v1) {} circle (.08);
\draw[color=black] (-4,1.5) node (v1) {} circle (.08);
\draw[color=black] (-4,1) node (v1) {} circle (.08);
\draw[color=black] (-2.5,0.5) node (v1) {} circle (.08);
\draw[color=black] (-1,1.5) node (v1) {} circle (.08);
\draw[color=black] (-2,0.5) node (v1) {} circle (.08);
\draw[color=black] (-2,1.5) node (v1) {} circle (.08);
\draw[color=black] (-2,1) node (v1) {} circle (.08);
\draw[color=black] (-1,1) node (v1) {} circle (.08);
\draw[color=black] (-2.5,2.5) node (v1) {} circle (.08);
\draw[color=black] (-2.5,2) node (v1) {} circle (.08);
\draw[color=black] (-2.5,1.5) node (v1) {} circle (.08);
\draw[color=black] (-2.5,1) node (v1) {} circle (.08);
\draw[color=black] (-1.5,1.5) node (v1) {} circle (.08); 
\draw[color=black] (-1.5,0.5) node (v1) {} circle (.08); 
\draw[color=black] (-1.5,1) node (v1) {} circle (.08);
\draw[color=black] (0,1.5) node (v1) {} circle (.08); 
\draw[color=black] (0,0.5) node (v1) {} circle (.08); 
\draw[color=black] (0,1) node (v1) {} circle (.08);
\draw[color=black] (-0.5,1.5) node (v1) {} circle (.08); 
\draw[color=black] (-0.5,0.5) node (v1) {} circle (.08); 
\draw[color=black] (-0.5,1) node (v1) {} circle (.08);
\draw[color=black] (-3.5,2.5) node (v1) {} circle (.08);
\draw[color=black] (-3,1.5) node (v1) {} circle (.08);
\draw[color=black] (-3,2) node (v1) {} circle (.08);
\draw[color=black] (-3,2.5) node (v1) {} circle (.08);
\draw[color=black] (-3.5,0.5) node (v1) {} circle (.08);
\draw[color=black] (-3.5,1) node (v1) {} circle (.08);
\draw[color=black] (-3.5,1.5) node (v1) {} circle (.08);
\draw[color=black] (-3.5,2) node (v1) {} circle (.08);
\draw[color=black] (-3,0.5) node (v1) {} circle (.08);
\draw[color=black] (-3,1) node (v1) {} circle (.08);
\draw[color=black] (-1,0.5) node (v1) {} circle (.08);

\node[scale=0.7] at (-5,0.25) {$0$}; 
\node[scale=0.7] at (-4.5,0.25) {$1$}; 
\node[scale=0.7] at (-4,0.25) {$2$}; 
\node[scale=0.7] at (-3.5,0.25) {$3$}; 
\node[scale=0.7] at (-3,0.25) {$4$}; 
\node[scale=0.7] at (-5.25,0.5) {$0$};
\node[scale=0.7] at (-5.25,1) {$1$};
\node[scale=0.7] at (-5.25,2) {$3$};
\node[scale=0.7] at (-2.5,0.25) {$5$};
\node[scale=0.7] at (-5.25,1.5) {$2$}; 
\node[scale=0.7] at (-5.25,2.5) {$4$}; 
\node[scale=0.7] at (0,0.25) {$10$};
\node[scale=0.7] at (-0.5,0.25) {$9$};
\node[scale=0.7] at (-1,0.25) {$8$};
\node[scale=0.7] at (-1.5,0.25) {$7$};
\node[scale=0.7] at (-2,0.25) {$6$};

\node[scale=0.8] at (-4.875,0.625) {a};
\node[scale=0.8] at (-4.875,1.125) {b};
\node[scale=0.8] at (-4.875,1.625) {c};

\end{tikzpicture}}}\hspace{5pt}
\subfloat[\label{fig:coal_ex1}]{
\scalebox{0.9}{\begin{tikzpicture}[scale=1.3]
\definecolor{color0}{rgb}{0.12156862745098,0.466666666666667,0.705882352941177}

\draw[very thin]  (-5,2.5) rectangle (-2.5,0.5); 
\draw[very thin]  (-2.5,1.5) rectangle (0,0.5);
\draw [help lines,  step=0.5cm] (-5,0.5) node (v19) {} grid (-2.5,2.5); 
\draw [help lines,  step=0.5cm] (-2.5,0.5) node (v19) {} grid (0,1.5); 
\draw[fill, color=black] (-4.5,0.5) node (v1) {} circle (.08);
\draw[fill, color=color0] (-4.5,2.5) node (v1) {} circle (.08);
\draw[color=black] (-4,1) node (v1) {} circle (.08);
\draw[fill, color=black] (-4.5,1.5) node (v1) {} circle (.08);
\draw[fill, color=black] (-4.5,1) node (v1) {} circle (.08);
\draw[fill, color=black] (-5,0.5) node (v0) {} circle (.08); 
\draw[fill, color=black] (-5,2.5) node (v1) {} circle (.08);
\draw[fill, color=black] (-5,2) node (v1) {} circle (.08);
\draw[fill, color=black] (-5,1.5) node (v1) {} circle (.08);
\draw[fill, color=black] (-5,1) node (v1) {} circle (.08);
\draw[fill, color=color0] (-4,0.5) node (v1) {} circle (.08);
\draw[color=black] (-4,2.5) node (v1) {} circle (.08);
\draw[color=black] (-4,2) node (v1) {} circle (.08);
\draw[color=black] (-4,1.5) node (v1) {} circle (.08);
\draw[fill, color=color0] (-4.5,2) node (v1) {} circle (.08);
\draw[color=black] (-2.5,0.5) node (v1) {} circle (.08);
\draw[color=black] (-1,1.5) node (v1) {} circle (.08);
\draw[color=black] (-2,0.5) node (v1) {} circle (.08);
\draw[color=black] (-2,1.5) node (v1) {} circle (.08);
\draw[color=black] (-2,1) node (v1) {} circle (.08);
\draw[color=black] (-1,1) node (v1) {} circle (.08);
\draw[color=black] (-2.5,2.5) node (v1) {} circle (.08);
\draw[color=black] (-2.5,2) node (v1) {} circle (.08);
\draw[color=black] (-2.5,1.5) node (v1) {} circle (.08);
\draw[color=black] (-2.5,1) node (v1) {} circle (.08);
\draw[color=black] (-1.5,1.5) node (v1) {} circle (.08); 
\draw[color=black] (-1.5,0.5) node (v1) {} circle (.08); 
\draw[color=black] (-1.5,1) node (v1) {} circle (.08);
\draw[color=black] (0,1.5) node (v1) {} circle (.08); 
\draw[color=black] (0,0.5) node (v1) {} circle (.08); 
\draw[color=black] (0,1) node (v1) {} circle (.08);
\draw[color=black] (-0.5,1.5) node (v1) {} circle (.08); 
\draw[color=black] (-0.5,0.5) node (v1) {} circle (.08); 
\draw[color=black] (-0.5,1) node (v1) {} circle (.08);
\draw[color=black] (-3.5,2.5) node (v1) {} circle (.08);
\draw[color=black] (-3,1.5) node (v1) {} circle (.08);
\draw[color=black] (-3,2) node (v1) {} circle (.08);
\draw[color=black] (-3,2.5) node (v1) {} circle (.08);
\draw[color=black] (-3.5,0.5) node (v1) {} circle (.08);
\draw[color=black] (-3.5,1) node (v1) {} circle (.08);
\draw[color=black] (-3.5,1.5) node (v1) {} circle (.08);
\draw[color=black] (-3.5,2) node (v1) {} circle (.08);
\draw[color=black] (-3,0.5) node (v1) {} circle (.08);
\draw[color=black] (-3,1) node (v1) {} circle (.08);
\draw[color=black] (-1,0.5) node (v1) {} circle (.08);

\node[scale=0.7] at (-5,0.25) {$0$}; 
\node[scale=0.7] at (-4.5,0.25) {$1$}; 
\node[scale=0.7] at (-4,0.25) {$2$}; 
\node[scale=0.7] at (-3.5,0.25) {$3$}; 
\node[scale=0.7] at (-3,0.25) {$4$}; 
\node[scale=0.7] at (-5.25,0.5) {$0$};
\node[scale=0.7] at (-5.25,1) {$1$};
\node[scale=0.7] at (-5.25,2) {$3$};
\node[scale=0.7] at (-2.5,0.25) {$5$};
\node[scale=0.7] at (-5.25,1.5) {$2$}; 
\node[scale=0.7] at (-5.25,2.5) {$4$}; 
\node[scale=0.7] at (0,0.25) {$10$};
\node[scale=0.7] at (-0.5,0.25) {$9$};
\node[scale=0.7] at (-1,0.25) {$8$};
\node[scale=0.7] at (-1.5,0.25) {$7$};
\node[scale=0.7] at (-2,0.25) {$6$};

\node [scale=0.8] at (-3.875,0.625) {a};
\node [scale=0.8] at (-4.375,1.875) {b};
\node [scale=0.8] at (-4.375,2.375) {c};

\end{tikzpicture}}}
\caption{Depictions of the derivative sets in \exaref{coal_ex1}.}
\end{figure*}

% \begin{figure}
% \centering
% \resizebox{0.6\linewidth}{!}{
% \input{figures/coalescence_ex1_piv}}

% \caption{\label{fig:coal_ex1_piv}Depiction of $U_{z_{i},M_{p-1}}$ in \exaref{coal_ex1}}

% \end{figure}

% \begin{figure}
% \centering
% \resizebox{0.6\linewidth}{!}{\input{figures/coalescence_ex1_var}}

% \caption{\label{fig:coal_ex1_var}Depiction of $U_{z_{j},M_{p-1}}$ in \exaref{coal_ex1}}

% \end{figure}

% \begin{figure}[t]
% \centering
% \resizebox{0.6\linewidth}{!}{\input{figures/coalescence_ex1}}

% \caption{\label{fig:coal_ex1}Depiction of $U_{z_{i},M_p}$ after coalescence in
% \exaref{coal_ex1}}

% \end{figure}
\end{example}
\vspace{-5pt}
\begin{example}
\label{exa:coal_ex2}Let us consider the same setting in \exaref{coal_ex1}, but assume $U_{z_{i},M_{p-1}}=\{(a,b):(a,b)\leq(6,1)\}$. Since the maximum number
of computations a worker can provide is $m=3$, the cardinality of the
derivative set of the variable node $U_{z_j,M_{p-1}}$, in this example, is at its maximum.
Thus, we can directly follow the same procedure as in \exaref{coal_ex1}.
Note that after the coalescence, $U_{z_{i}M_p}$ will have 34 elements,
and the lower set having 34 elements is unique and well defined. To
obtain the shape in \figref{coal_exa2}, we need $(\alpha_{1},\alpha_{2})=(0,19)$ \revision{with $\boldsymbol{k}(i_a)=7$,  $\boldsymbol{k}(i_b)=6$ and $\boldsymbol{k}(i_c)=6$,
and it is a unique shift order since any other assignment of 19 shifts to $a$, $b$ and $c$ results in a non-singular $M_p$.}

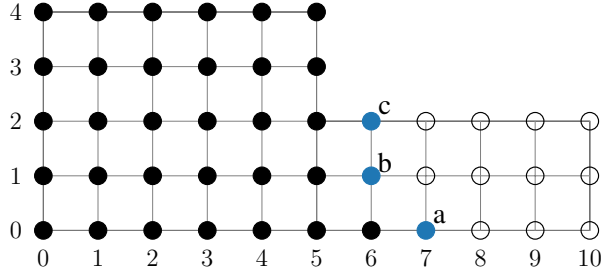
\begin{figure}
\centering
\resizebox{0.5\linewidth}{!}{\begin{tikzpicture}[scale=1.3]
\definecolor{color0}{rgb}{0.12156862745098,0.466666666666667,0.705882352941177}

\draw[very thin]  (-5,2.5) rectangle (-2.5,0.5); 
\draw[very thin]  (-2.5,1.5) rectangle (0,0.5);
\draw [help lines,  step=0.5cm] (-5,0.5) node (v19) {} grid (-2.5,2.5); 
\draw [help lines,  step=0.5cm] (-2.5,0.5) node (v19) {} grid (0,1.5); 
\draw[fill, color=black] (-4.5,0.5) node (v1) {} circle (.08);
\draw[fill, color=black] (-4.5,2.5) node (v1) {} circle (.08);
\draw[fill, color=black] (-4.5,2) node (v1) {} circle (.08);
\draw[fill, color=black] (-4.5,1.5) node (v1) {} circle (.08);
\draw[fill, color=black] (-4.5,1) node (v1) {} circle (.08);
\draw[fill, color=black] (-5,0.5) node (v0) {} circle (.08); 
\draw[fill, color=black] (-5,2.5) node (v1) {} circle (.08);
\draw[fill, color=black] (-5,2) node (v1) {} circle (.08);
\draw[fill, color=black] (-5,1.5) node (v1) {} circle (.08);
\draw[fill, color=black] (-5,1) node (v1) {} circle (.08);
\draw[fill, color=black] (-4,0.5) node (v1) {} circle (.08);
\draw[fill, color=black] (-4,2.5) node (v1) {} circle (.08);
\draw[fill, color=black] (-4,2) node (v1) {} circle (.08);
\draw[fill, color=black] (-4,1.5) node (v1) {} circle (.08);
\draw[fill, color=black] (-4,1) node (v1) {} circle (.08);
\draw[fill, color=black] (-2.5,0.5) node (v1) {} circle (.08);
\draw[color=black] (-1,1.5) node (v1) {} circle (.08);
\draw[fill, color=black] (-2,0.5) node (v1) {} circle (.08);
\draw[fill, color=color0] (-2,1.5) node (v1) {} circle (.08);
\draw[fill, color=color0] (-2,1) node (v1) {} circle (.08);
\draw[color=black] (-1,1) node (v1) {} circle (.08);
\draw[fill, color=black] (-2.5,2.5) node (v1) {} circle (.08);
\draw[fill, color=black] (-2.5,2) node (v1) {} circle (.08);
\draw[fill, color=black] (-2.5,1.5) node (v1) {} circle (.08);
\draw[fill, color=black] (-2.5,1) node (v1) {} circle (.08);
\draw[color=black] (-1.5,1.5) node (v1) {} circle (.08); 
\draw[fill, color=color0] (-1.5,0.5) node (v1) {} circle (.08); 
\draw[color=black] (-1.5,1) node (v1) {} circle (.08);
\draw[color=black] (0,1.5) node (v1) {} circle (.08); 
\draw[color=black] (0,0.5) node (v1) {} circle (.08); 
\draw[color=black] (0,1) node (v1) {} circle (.08);
\draw[color=black] (-0.5,1.5) node (v1) {} circle (.08); 
\draw[color=black] (-0.5,0.5) node (v1) {} circle (.08); 
\draw[color=black] (-0.5,1) node (v1) {} circle (.08);
\draw[fill, color=black] (-3.5,2.5) node (v1) {} circle (.08);
\draw[fill, color=black] (-3,1.5) node (v1) {} circle (.08);
\draw[fill, color=black] (-3,2) node (v1) {} circle (.08);
\draw[fill, color=black] (-3,2.5) node (v1) {} circle (.08);
\draw[fill, color=black] (-3.5,0.5) node (v1) {} circle (.08);
\draw[fill, color=black] (-3.5,1) node (v1) {} circle (.08);
\draw[fill, color=black] (-3.5,1.5) node (v1) {} circle (.08);
\draw[fill, color=black] (-3.5,2) node (v1) {} circle (.08);
\draw[fill, color=black] (-3,0.5) node (v1) {} circle (.08);
\draw[fill, color=black] (-3,1) node (v1) {} circle (.08);
\draw[color=black] (-1,0.5) node (v1) {} circle (.08);

\node[scale=0.7] at (-5,0.25) {$0$}; 
\node[scale=0.7] at (-4.5,0.25) {$1$}; 
\node[scale=0.7] at (-4,0.25) {$2$}; 
\node[scale=0.7] at (-3.5,0.25) {$3$}; 
\node[scale=0.7] at (-3,0.25) {$4$}; 
\node[scale=0.7] at (-5.25,0.5) {$0$};
\node[scale=0.7] at (-5.25,1) {$1$};
\node[scale=0.7] at (-5.25,2) {$3$};
\node[scale=0.7] at (-2.5,0.25) {$5$};
\node[scale=0.7] at (-5.25,1.5) {$2$}; 
\node[scale=0.7] at (-5.25,2.5) {$4$}; 
\node[scale=0.7] at (0,0.25) {$10$};
\node[scale=0.7] at (-0.5,0.25) {$9$};
\node[scale=0.7] at (-1,0.25) {$8$};
\node[scale=0.7] at (-1.5,0.25) {$7$};
\node[scale=0.7] at (-2,0.25) {$6$};

\node[scale=0.8] at (-1.375,0.625) {a};
\node[scale=0.8] at (-1.875,1.625) {c};
\node[scale=0.8] at (-1.875,1.125) {b};
\end{tikzpicture}}

\caption{\label{fig:coal_exa2}Depiction of $U_{z_{i},M_p}$ after coalescence
in \exaref{coal_ex2}}
\end{figure}
\end{example}

\begin{figure*}
\centering
\subfloat[\label{fig:proof_fig1}]{
\scalebox{0.9}{	\usetikzlibrary{decorations.pathreplacing}
	\begin{tikzpicture}[scale=1.3]
	
	\draw [help lines,  step=0.5cm] (-3.5, -3.5) node (v19) {} grid (-1,1.5) node (v2) {}; 
	\draw [help lines,  step=0.5cm] (-1,-3.5) node (v19) {} grid (1.5,-1) node (v2) {}; 
	
	\node[scale=0.9] at (-3.25,-3.75) {$0$};  
	\node[scale=0.9] at (-1.25,-3.75) {$K+T-1$};   
	\node[scale=0.9] at (1.25,-3.75) {$2K+2T-2$};

	\node[scale=0.9] at (-3.875,-3.25) {$0$};    
	  
	\node[scale=0.9] at (-3.875,-1.25) {$m-1$};
	\node[scale=0.9] at (-3.875,1.25) {$L-1$};  
	 
	\draw[fill, color=black] (-3.25,-3.25) circle (.08);  
	\draw[fill, color=black] (-3.25,1.25) circle (.08);  
	\draw[fill, color=black] (-3.25,0.75) circle (.08);

	\draw[fill, color=black] (-2.75,1.25) circle (.08);  
	\draw[fill, color=black] (-2.75,0.75) circle (.08);

	\draw[fill, color=black] (-2.75,-1.25) circle (.08);

	\draw[fill, color=black] (-2.75,-3.25) circle (.08);  
	\draw[fill, color=black] (-3.25,-1.25) circle (.08);

	\draw[fill, color=black] (-2.25,-1.25) circle (.08);

	\draw[fill, color=black] (-2.25,-3.25) circle (.08);  
	\draw[fill, color=black] (-2.25,-0.75) circle (.08);  
	\draw[fill, color=black] (-3.25,-0.75) circle (.08);  
	\draw[fill, color=black] (-2.75,-0.75) circle (.08);

	\draw[color=black] (-2.25,1.25) circle (.08);  
	\draw[color=black] (-2.25,1.25) circle (.08);  
	\draw[color=black] (-2.25,-0.25) circle (.08);  
	\draw[color=black] (-2.25,0.75) circle (.08);  
	
	\draw[color=black] (-1.75,-3.25) circle (.08);  
	\draw[color=black] (-1.75,-2.25) circle (.08);

	\node at (-3.25,-1.625) {$\vdots$};
	\node at (-3.25,-2.125) {$\vdots$};
	\node at (-3.25,0.375) {$\vdots$};
	\node at (-3.25,-0.125) {$\vdots$};
	\node at (-2.75,0.375) {$\vdots$};
	\node at (-2.75,-0.125) {$\vdots$};
	\node at (-2.25,0.375) {$\vdots$};
	\node at (-2.75,-1.625) {$\vdots$};
	\node at (-2.75,-2.625) {$\vdots$};
	\node at (-2.75,-2.125) {$\vdots$};
	\node at (-3.25,-2.625) {$\vdots$};
	\node at (-2.25,-2.625) {$\vdots$};
	\node at (-2.25,-2.125) {$\vdots$};
	\node at (-2.25,-1.625) {$\vdots$};

	\node at (-1.75,-2.625) {$\vdots$};
	
\node (v1) at (-2,1.5) {};
\node (v3) at (-2,-0.5) {};
\draw [decorate, decoration={brace, amplitude=5pt}] (v1) -- (v3);
\node at (-1.75,0.5) {$\mu$};

\node (v4) at (-1.5,-2) {};
\node (v5) at (-1.5,-3.5) {};
\draw [decorate, decoration={brace, amplitude=5pt}] (v4) -- (v5);
\node at (-1.25,-2.75) {$\xi$};

\node at (-2.25,-3.75) {$p_x$};
\node at (-3.875,-0.75) {$p_y$};
	\node at (-3.875,-0.875) {$\vdots$};
	
	\end{tikzpicture}}}
\subfloat[\label{fig:proof_fig2}]{
\scalebox{0.9}{\usetikzlibrary{decorations.pathreplacing}
\begin{tikzpicture}[scale=1.3]

\draw [help lines,  step=0.5cm] (-3.5, -3.5) node (v19) {} grid (-1,1.5) node (v2) {}; 
\draw [help lines,  step=0.5cm] (-1,-3.5) node (v19) {} grid (1.5,-1) node (v2) {}; 

\node[scale=0.9] at (-3.25,-3.75) {$0$};  
\node[scale=0.9] at (-1.25,-3.75) {$K+T-1$};   
\node[scale=0.9] at (1.25,-3.75) {$2K+2T-2$};

\node[scale=0.9] at (-3.875,-3.25) {$0$};    
  
\node[scale=0.9] at (-3.875,-1.25) {$m-1$};
\node[scale=0.9] at (-3.875,1.25) {$L-1$};

\draw[color=black] (-2.25,1.25) circle (.08);  
\draw[color=black] (-2.25,1.25) circle (.08);  
\draw[color=black] (-2.25,-0.25) circle (.08);  
\draw[color=black] (-2.25,0.75) circle (.08);  

\draw[color=black] (-1.75,-3.25) circle (.08);  
\draw[color=black] (-1.75,-2.25) circle (.08);

\node at (-2.25,0.375) {$\vdots$};
\node at (-1.75,-2.625) {$\vdots$};
\node at (-3.25,-2.625) {$\vdots$};
\node at (-3.25,-2.125) {$\vdots$};

\node at (-3.25,-3.25) {$\Phi_1$};

\node at (-3.25,-1.75) {$\Phi_{v_y+1}$};

\node (v1) at (-2,1.5) {};
\node (v3) at (-2,-0.5) {};
\draw [decorate, decoration={brace, amplitude=5pt}] (v1) -- (v3);
\node at (-1.75,0.5) {$\mu$};

\node (v4) at (-1.5,-2) {};
\node (v5) at (-1.5,-3.5) {};
\draw [decorate, decoration={brace, amplitude=5pt}] (v4) -- (v5);
\node at (-1.25,-2.75) {$\xi$};

\end{tikzpicture}}}\\
\subfloat[\label{fig:proof_fig3}]{
\scalebox{0.9}{\usetikzlibrary{decorations.pathreplacing}
\begin{tikzpicture}[scale=1.3]

\draw [help lines,  step=0.5cm] (-3.5, -3.5) node (v19) {} grid (-1,1.5) node (v2) {}; 
\draw [help lines,  step=0.5cm] (-1,-3.5) node (v19) {} grid (1.5,-1) node (v2) {}; 

\node[scale=0.9] at (-3.25,-3.75) {$0$};  
\node[scale=0.9] at (-1.25,-3.75) {$K+T-1$};   
\node[scale=0.9] at (1.25,-3.75) {$2K+2T-2$};

\node[scale=0.9] at (-3.875,-3.25) {$0$};    
  
\node[scale=0.9] at (-3.875,-1.25) {$m-1$};
\node[scale=0.9] at (-4,1.25) {$L-1$};

\draw[color=black] (-2.25,1.25) circle (.08);  
\draw[color=black] (-2.25,1.25) circle (.08);  
\draw[color=black] (-2.25,-0.25) circle (.08);  
\draw[color=black] (-2.25,0.75) circle (.08);  

\draw[color=black] (-1.75,-3.25) circle (.08);  
\draw[color=black] (-1.75,-2.25) circle (.08);

\node at (-2.25,0.375) {$\vdots$};
\node at (-1.75,-2.625) {$\vdots$};
\node at (-3.25,-2.625) {$\vdots$};
\node at (-3.25,0.875) {$\vdots$};
\node at (-3.25,0.375) {$\vdots$};

\node at (-3.25,-3.25) {$\Phi_1$};
\node at (-3.25,-2.25) {$\Phi_{\xi}$};

\node at (-3.25,1.25) {$\Phi_{v_y+1}$};
\node at (-3.25,-0.25) {$\Phi_{\xi+1}$};

\node (v1) at (-2,1.5) {};
\node (v3) at (-2,-0.5) {};
\draw [decorate, decoration={brace, amplitude=5pt}] (v1) -- (v3);
\node at (-1.75,0.5) {$\mu$};

\node (v4) at (-1.5,-2) {};
\node (v5) at (-1.5,-3.5) {};
\draw [decorate, decoration={brace, amplitude=5pt}] (v4) -- (v5);
\node at (-1.25,-2.75) {$\xi$};

\end{tikzpicture}}}
\caption{Visualization of the pivot node and the variable node during a coalescence}
\end{figure*}
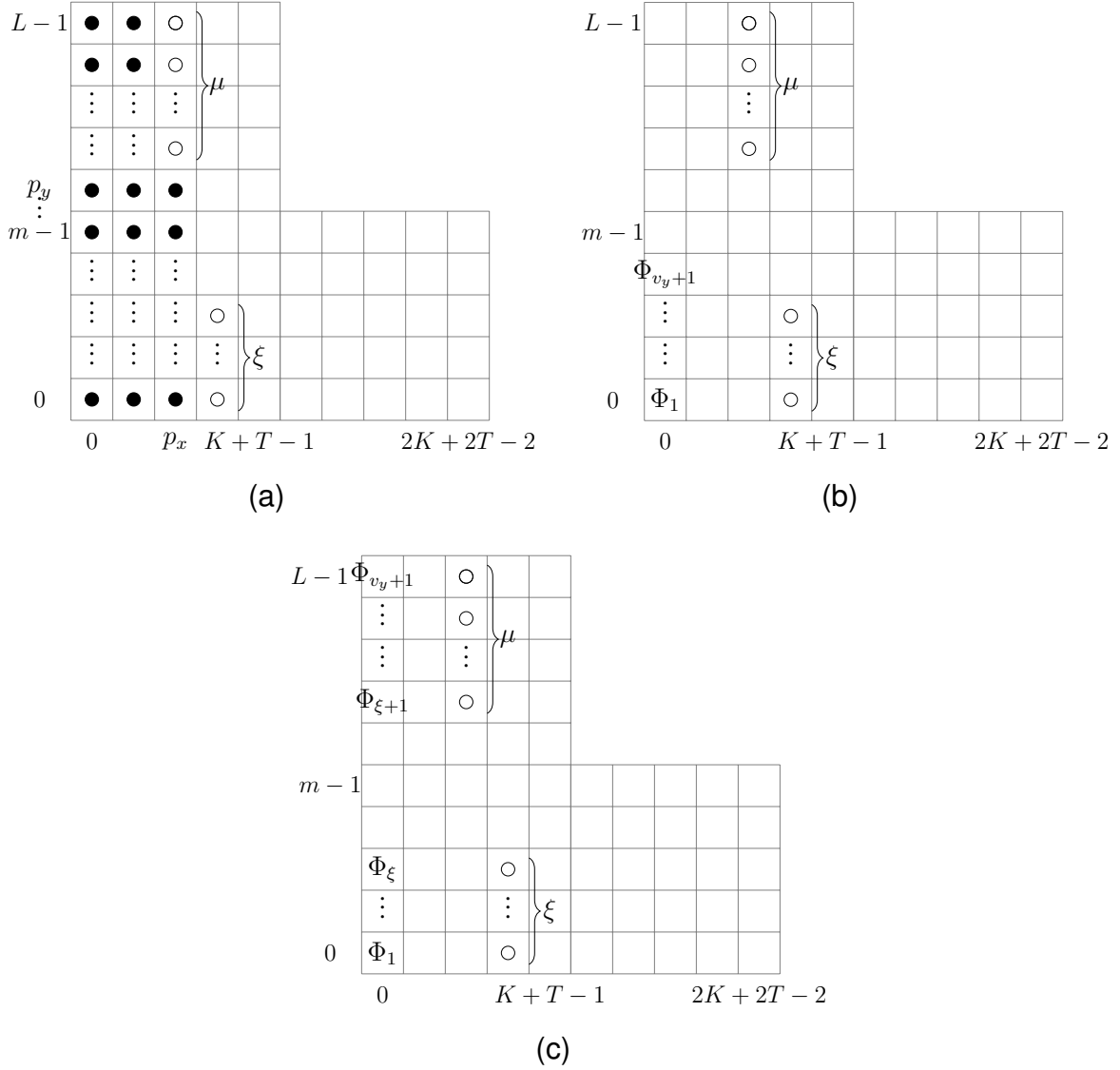

Next, we formally state our strategy \revision{for an arbitrary coalescence step $p$.} Since we choose one pivot node and use it for every coalescence step, we guarantee
that the variable node's derivative set has always at most $m$ elements.
To generalize the procedure in \exaref{coal_ex1} and \exaref{coal_ex2}, let us assume $(p_{x},p_{y})$ is the largest element of the derivative
set of the pivot node $z_{i}$,
%with respect to our definition of element
%ordering, 
 i.e., $U_{z_{i},M_{p-1}}=\{(a,b):(0,0)\leq(a,b)\leq(p_{x},p_{y})\}$ and
$(0,v_{y})$ is the largest element of the derivative set of the variable node
$z_{j}$, i.e., $U_{z_{j},M_{p-1}}=\{(0,b):0\leq b\leq v_{y}\}$. While calculating
$(\alpha_{1},\alpha_{2})$ pair, we first determine $\alpha_{2}$,
which is the total derivative order with respect to $y_{j}$, or equivalently
the number of shifts towards $y$-direction in the derivative order
set. \revision{This means that we first take the derivatives with respect to $y_i$, and then with respect to $x_i$. In \figref{proof_fig1}, in the derivative order space, for $p_x\leq K+T-1$, we depict the derivative set of the pivot node, i.e., $U_{z_i,M_{p-1}}$, by filled circles, and the locations to which the elements of $U_{z_j,M_{p-1}}$ will be placed after the coalescence by unfilled circles. Note that we determine these locations by inserting the elements of $U_{z_j,M_{p-1}}$ into $U_{z_i,M_{p-1}}$ such that the derivative set of the pivot after the coalescence, i.e., $U_{z_i,M_p}$, satisfies the lower set property. In \figref{proof_fig2}, instead of the elements of $U_{z_i,M_{p-1}}$, we depict the elements of $U_{z_j,M_{p-1}}$ together with the locations they will be placed after the coalescence to facilitate visualization of the necessary shifts. To be able to keep track of the elements, we depict each one of them by $\Phi_i$, $i\in[v_y+1]$.} We denote the number of elements in $U_{z_{j},M_{p-1}}$ to be shifted towards
$y$-direction, by $\mu$. \revision{We further define $\xi\triangleq v_y+1-\mu$. As shown in \figref{proof_fig1}, when the number of
empty spaces in the rightmost partially occupied column of $U_{z_{i},M_{p-1}}$ is smaller than $|U_{z_j,M_{p-1}}|=v_y+1$, $\mu$ becomes this number, i.e., $\mu=L-p_y-1$, since these spaces must be filled. Otherwise, to fill as many as spaced possible, all elements of the derivative set of the pivot node are shifted towards $y$-direction and $\mu$ becomes $v_y+1$.}
Thus, $\mu=\min\{L-p_{y}-1,v_{y}+1\}$ if $p_{x}\leq K+T-1$. \revision{When $p_x>K+T-1$, in fact, the same logic also applies but the maximum number of elements that can be placed in a column in \figref{proof_fig1} would be $m$ instead of $L$. Thus, the expression for $\mu$ is modified as $\mu=\min\{m-p_{y}-1,v_{y}+1\}$, which is obtained by replacing $L$ with $m$.}

\revision{Next, recall that only regular simple shifts are considered for unique shifts. Thus, while taking $y$-directional derivatives, i.e., shifts towards $y$-direction in \figref{proof_fig2}, the sequence of the elements in the $y$-axis does not change. For instance, $\Phi_{v_y+1}$ stays always on top of the elements denoted by $\Phi_i,i\in[v_y]$. If, for example, as a result of some shifts, $\Phi_{v_y}$ is placed on top of $\Phi_{v_y+1}$, then this would be possible only if the element $\Phi_{v_y}$ is located in the same location as $\Phi_{v_y+1}$ at some point, and this would contradict the assumption of regular simple shifts. Hence, there is only one resulting order after shifting the uppermost $\mu$ elements towards $y$-direction. We show the elements of the variable node's derivative set after $y$-directional shifts in \figref{proof_fig3}. All the remaining shifts, now, are the ones towards $x$-direction so that the elements of $U_{z_j,M_{p-1}}$ are located to their intended locations, i.e., unfilled circles in \figref{proof_fig3}. Notice that each $\Phi_i$ is already aligned with its final location in $y$-direction, and hence, each one of them will be shifted towards $x$-direction by a sufficient amount. Therefore, these shifts also result in a unique shape. From these observations, we can conclude that whenever the derivative sets of the pivot node and the variable node are lower sets, there exists a unique shift for their coalescence.}

%Then, the number of shifts each of these elements will have is $p_{y}-v_{y}+\mu$, resulting in $\alpha_{2}=(p_{y}-v_{y}+\mu)\mu$. 
\begin{comment}
To align the elements of $U_{z_{j}}$
with $U_{z_{i}}$ in $x$-direction, the same $\mu$ elements will
be shifted towards $x$-direction by $p_{x}$ and the rest will be
shifted by $p_{x}+1$. Thus, $\alpha_{1}=\mu p_{x}+(p_{x}+1)(v_{y}+1-\mu)$.
Note that, with this $(\alpha_{1},\alpha_{2})$, the resulting $U_{z_{i}}$
after the coalescence will be still a lower set. Also observe that
with this $(\alpha_{1},\alpha_{2})$, no other shape can be obtained
without having two elements sharing the same derivative order, which
results in a non-singular interpolation matrix. Thus, $(\alpha_{1},\alpha_{2})$
is a unique shift order.
\end{comment}

From this discussion, we can conclude that $\det(M)$ is not zero
polynomial for large enough $q$. Next, we need to find the upper
bound on the probability $\det(M)=0$, when the evaluation points
are sampled uniform randomly from $\mathbb{F}$.
\begin{lem}
\label{lem:schwartz_lemma}\textbf{\emph{\cite[Lemma 1]{schwartz1980fast}}}
Assume $P$ is a non-zero, $v$-variate polynomial of variables $\alpha_{i},i\in[v]$.
Let $d_{1}$ be the degree of $\alpha_{1}$ in $P(\alpha_{1},\dots,\alpha_{v})$,
and $P_{2}(\alpha_{2},\dots,\alpha_{v})$ be the coefficient of $\alpha_{1}^{d_{1}}$in\textup{
$P(\alpha_{1},\dots,\alpha_{v})$. }\textup{\emph{Inductively, let
$d_{j}$ be the degree of $\alpha_{j}$ in $P_{j}(\alpha_{j},\dots,\alpha_{v})$
and $P_{j+1}(\alpha_{j+1},\dots,\alpha_{v})$ be the coefficient of
$\alpha_{j}$ in $P_{j}(\alpha_{j},\dots,\alpha_{v})$. Let $S_{j}$
be a set of elements from a field $\mathbb{F}$, from which the coefficients
of $P$ are chosen. Then, in the Cartesian product set $S_{1}\times S_{2}\times\dots\times S_{v}$,
$P(\alpha_{1},\dots,\alpha_{v})$ has at most $\left|S_{1}\times S_{2}\times\dots\times S_{v}\right|\left(\frac{d_{1}}{|S_{1}|}+\frac{d_{2}}{|S_{2}|}+\dots+\frac{d_{v}}{|S_{v}|}\right)$
zeros.}} 
\end{lem}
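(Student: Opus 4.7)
The plan is to prove this Schwartz--Zippel-type bound by induction on the number of variables $v$, because the inductive definition of $d_j$ and $P_{j+1}$ via the leading coefficient with respect to $\alpha_j$ naturally lends itself to a recursion in which we "peel off" one variable at a time. The base case $v=1$ is immediate: a nonzero univariate polynomial of degree $d_1$ has at most $d_1$ roots in $\mathbb{F}$, so at most $d_1$ zeros in any $S_1 \subseteq \mathbb{F}$, matching the bound $|S_1|\cdot d_1/|S_1|=d_1$.

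For the inductive step, assume the lemma for $(v-1)$-variate polynomials. I would write
\[
P(\alpha_1,\ldots,\alpha_v)=\sum_{k=0}^{d_1}\alpha_1^{k}\, Q_k(\alpha_2,\ldots,\alpha_v),
\]
so that the leading coefficient $Q_{d_1}$ equals $P_2(\alpha_2,\ldots,\alpha_v)$, which is nonzero by the choice of $d_1$. Then I would split the product set $S_2\times\cdots\times S_v$ into the set $\mathcal{G}$ of tuples on which $P_2\neq 0$ and the set $\mathcal{B}$ of tuples on which $P_2=0$, and count the zeros of $P$ in $S_1\times\cdots\times S_v$ by restricting to each part separately.

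For any $(\alpha_2,\ldots,\alpha_v)\in\mathcal{G}$, the restriction of $P$ is a nonzero univariate polynomial in $\alpha_1$ of degree exactly $d_1$, contributing at most $d_1$ zeros in $S_1$; so zeros lying in $S_1\times\mathcal{G}$ number at most $d_1\,|S_2\times\cdots\times S_v|$. For any $(\alpha_2,\ldots,\alpha_v)\in\mathcal{B}$, the restriction could be identically zero, so I bound the contribution trivially by $|S_1|\,|\mathcal{B}|$. Since $\mathcal{B}$ is exactly the zero set of the $(v-1)$-variate nonzero polynomial $P_2$ (with its $d_j$'s, $j\geq 2$, being precisely the ones given in the lemma), the inductive hypothesis yields $|\mathcal{B}|\leq |S_2\times\cdots\times S_v|\sum_{j=2}^{v}\frac{d_j}{|S_j|}$. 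Adding the two contributions gives
\[
d_1\,\bigl|S_2\times\cdots\times S_v\bigr|+|S_1|\,\bigl|S_2\times\cdots\times S_v\bigr|\sum_{j=2}^{v}\frac{d_j}{|S_j|}=\bigl|S_1\times\cdots\times S_v\bigr|\sum_{j=1}^{v}\frac{d_j}{|S_j|},
\]
which is the claimed bound.

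The main obstacle, such as it is, is verifying that the leading coefficient $P_2$ inherits exactly the recursive structure of $d_2,\ldots,d_v$ and $P_3,\ldots,P_v$ required for a clean application of the inductive hypothesis; once that observation is made, the rest is a routine two-case count. Since this lemma is borrowed verbatim from \cite{schwartz1980fast}, no further ingredients beyond induction and the elementary fact that a nonzero univariate polynomial over a field has roots bounded by its degree are needed.
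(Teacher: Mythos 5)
The paper does not prove this lemma at all: it is quoted verbatim with a citation to \cite[Lemma 1]{schwartz1980fast}, so there is no in-paper proof to compare against. Your inductive argument is correct and is essentially Schwartz's original proof from that reference: peel off $\alpha_1$, note the leading coefficient $P_2$ is nonzero, split the remaining product set into the tuples where $P_2$ vanishes (bounded by the induction hypothesis, contributing $|S_1|\cdot|\mathcal{B}|$ trivially) and those where it does not (contributing at most $d_1$ roots in $S_1$ per tuple), then add; the recursive definition of the $d_j$ and $P_j$ in the lemma statement is exactly what makes the inductive hypothesis apply cleanly to $P_2$.
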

In our case, $\det(M)$ is a multivariate polynomial of the evaluation points $(x_{i},y_{i})$ since the elements of $M$ are the monomials of $A(x)B(x,y)$
and their derivatives with respect to $y$, evaluated at some $(x_{i},y_{i})$.
Thus, $v$ is the number of different evaluation points in $M$. We
choose the evaluation points from the whole field $\mathbb{F}$.
Thus, $S_{j}=\mathbb{F}$ and $|S_{j}|=q,\forall j\in[v]$, and
$\left|S_{1}\times S_{2}\times\dots\times S_{v}\right|=q^{v}$. Then,
the number of zeros of $\det(M)$ is at most $q^{v-1}(d_{1}+d_{2}+\dots+d_{v})$.
If we sample the evaluation points uniform randomly, then the probability
that $\det(M)=0$ is $(d_{1}+d_{2}+\dots+d_{v})/q$, since we sample
a $v$-tuple of evaluation points from $S_{1}\times S_{2}\times\dots\times S_{v}$.
To find $d_{1}+d_{2}+\dots+d_{v}$, we resort to the definition of
determinant, that is $\det(M)=\sum_{i=1}^{R_{th}}(-1)^{1+i}m_{1,i}M_{1,i}$,
where $m_{1,i}$ is the element of $M$ at row 1 and column $i$ and
$M_{1,i}$ is the minor of $M$ when row 1 and column $i$ are removed
\cite[Corollary 7.22]{liesen_linear_2015}. Thus, to identify the
coefficients in \lemref{schwartz_lemma}, in the first row of $M$,
we start with the monomial with the largest degree. Assuming the monomials
are placed in an increasing order of their degrees, the largest degree
monomial is at column $R_{th}$. If that monomial is univariate, then
$d_{1}$ is the degree of the monomial and the coefficient of $\alpha_{1}^{d_{1}}$
is $P_{2}(x_{2},\dots,x_{v})=\det(M_{1,1})$. If the monomial is bivariate,
then we take the degree of the corresponding evaluation of $x$, i.e.,
$\alpha_{1}$, as $d_{1}$, and the degree of the corresponding evaluation
of $y$, i.e., $\alpha_{2}$, as $d_{2}$. In this case, the coefficient
of $\alpha^{d_{2}}$ is $P_{3}(\alpha_{3},\dots,\alpha_{v})=\det(M_{1,1})$.
Next, we take $M_{1,1}$, and repeat the same procedure. We do so
until we reach a monomial of degree zero. In this procedure since
we visit all the monomials of $A(x)B(x,y)$ evaluated at different
evaluation points, i.e., $\alpha_{i}$'s, the sum $d_{1}+d_{2}+\dots+d_{v}$
becomes the sum of degrees of all the monomials of $A(x)B(x,y)$.
The next lemma helps us in computing this. 
\begin{lem}
\label{lem:gauss-trick}Consider the polynomial $P(x,y)=\sum_{i=0}^{a}\sum_{j=0}^{b}c_{ij}x^{i}y^{j}$,
where $c_{i,j}$'s are scalars. The sum of the degrees of all the monomials
of $P(x,y)$ is given by $\xi(a,b)\triangleq\frac{a(a+1)}{2}(b+1)+\frac{b(b+1)}{2}(a+1)$. 
\end{lem}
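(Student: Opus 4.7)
The plan is to directly expand the double sum using the definition of the degree of a monomial. The degree of the monomial $x^i y^j$ is $i+j$, so the quantity of interest is
\[
\xi(a,b) = \sum_{i=0}^{a}\sum_{j=0}^{b}(i+j).
\]
I would then split this into two sums, $\sum_{i,j} i + \sum_{i,j} j$, and in each one factor out the independent index. In the first sum, $i$ does not depend on $j$, so summing over $j$ contributes a factor $(b+1)$, giving $(b+1)\sum_{i=0}^{a} i$. Symmetrically, the second sum becomes $(a+1)\sum_{j=0}^{b} j$.

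Finally I would apply the standard Gauss identity $\sum_{k=0}^{n} k = \frac{n(n+1)}{2}$ to both inner sums, yielding
\[
\xi(a,b) = (b+1)\frac{a(a+1)}{2} + (a+1)\frac{b(b+1)}{2},
\]
which matches the claimed expression. No obstacle is anticipated; the entire argument is a one-line manipulation of finite arithmetic series, and the statement does not require any assumption on the coefficients $c_{ij}$ since the claim is about the sum of degrees of all monomials in the support $\{(i,j) : 0\le i \le a,\, 0\le j\le b\}$, not about which ones have nonzero coefficient.
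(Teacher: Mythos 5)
Your proof is correct and follows exactly the same route as the paper's: write the sum of degrees as $\sum_{i=0}^{a}\sum_{j=0}^{b}(i+j)$, split it into two sums, factor out the independent index, and apply the Gauss identity $\sum_{k=0}^{n}k=\frac{n(n+1)}{2}$. Your added remark that the claim concerns the full index set regardless of which $c_{ij}$ vanish is a reasonable clarification of the lemma's intent.
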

\begin{proof}
The sum of the degrees of all the monomials are given by 
\begin{multline}
\sum_{i=0}^{a}\sum_{j=0}^{b}(i+j)=\sum_{i=0}^{a}i(b+1)+\sum_{i=0}^{a}\sum_{j=0}^{b}j
=\frac{a(a+1)}{2}(b+1)+\frac{b(b+1)}{2}(a+1).
\end{multline}
\end{proof}
By using \lemref{gauss-trick}, the sum of monomial degrees in the
diagonally shaded rectangle in \figref{poly-coeffs} is 
\begin{align}
\xi(K+T-1,L-1)&=\frac{(K+T-1)(K+T)}{2}L+\frac{(L-1)L}{2}(K+T)\\
&= \frac{(K+T)L}{2}(K+L+T-2).
\end{align}
The sum of monomial degrees in the rectangle shaded by crosshatches
in \figref{poly-coeffs} is given by
\begin{align}
\xi&(2K+2T-2,m-1)-\xi(K+T-1,m-1)\nonumber\\
=& \frac{(2K+2T-2)(2K+2T-1)}{2}m+\frac{(m-1)m}{2}(2K+2T-1)\nonumber\\
&-\frac{(K+T-1)(K+T)}{2}m-\frac{(m-1)m}{2}(K+T)\nonumber\\
=& \frac{m}{2}\left(3(K+T)^{2}+m(K+T)-8K-6T-m+3\right).
\end{align}

By summing them we obtain $d_{1}+d_{2}+\dots+d_{v}=$ $\frac{m}{2}\left(3(K+T)^{2}+m(K+T)-8K-6T-m+3\right)$
$+\frac{(K+T)L}{2}(K+L+T-2)$, which concludes the proof.

%To find the sum of degrees of all the monomials of $A(x)B(x,y)$, we refer to \figref{poly-coeffs}. First to sum the monomial degrees in the diagonally shaded rectangle, we use $\xi(K+T-1,L-1)=\frac{(K+T)L}{2}(K+L+T-2)$. On the other hand, to find the sum of monomial degrees in the rectangle shaded by crosshatches, we use $\xi(2K+2T-2,m-1)-\xi(K+T-1,m-1)=\frac{m}{2}\left(3(K+T)^{2}+m(K+T)-8K-6T-m+3\right)$. If we sum these two quantities, we obtain $d_{1}+d_{2}+\dots+d_{v}=$ $\frac{m}{2}\left(3(K+T)^{2}+m(K+T)-8K-6T-m+3\right)$ $+\frac{(K+T)L}{2}(K+L+T-2)$, which concludes the proof. 

\vspace{-5pt}
\section{Conclusion}

In this work, for straggler exploitation in SDMM, we have proposed storage- and upload-cost-efficient bivariate Hermitian polynomial codes named SBP codes. Although the previous works usually assume the availability of at least as many workers as the recovery threshold, the multi-message approach allows the completion of the task even if the number of workers is less than the recovery threshold. Compared to univariate polynomial coding based approaches including MM-GASP codes, SBP coding scheme has a lower upload cost and less storage requirement, making the assignment of several sub-tasks to each worker more resource efficient. Thanks to these properties, SBP codes improve the average computation time for SDMM, especially when the number of workers, the upload cost budget, or the storage capacity is limited.
\vspace{-5pt}
\bibliographystyle{IEEEtran}
\bibliography{refs}

\end{document}